\documentclass[a4paper,USenglish,cleveref,autoref, thm-restate,numberwithinsect]{lipics-v2021}

\pdfoutput=1


\hideLIPIcs  

\bibliographystyle{plainurl}

\title{Testing \texorpdfstring{$H$}{H}-freeness on sparse graphs, the case of bounded expansion}

\author{Samuel Humeau}{ENS de Lyon, CNRS, LIP, UMR 5668, 69342, Lyon cedex 07, France}{samuel.humeau@ens-lyon.fr}{https://orcid.org/0009-0007-1850-9744}{}
\author{Mamadou Moustapha Kanté}{Université Clermont Auvergne, Clermont Auvergne INP, LIMOS, CNRS, Clermont-Ferrand, France}{mamadou.kante@uca.fr}{https://orcid.org/0000-0003-1838-7744}{}
\author{Daniel Mock}{RWTH Aachen University, Germany}{mock@cs.rwth-aachen.de}{https://orcid.org/0000-0002-0011-6754}{Supported by the Deutsche Forschungsgemeinschaft
(DFG, German Science Foundation) – DFG-927/15-2}
\author{Timothé Picavet}{LaBRI, Université de Bordeaux, France}{timothe.picavet@u-bordeaux.fr}{https://orcid.org/0000-0002-7129-0127}{}
\author{Alexandre Vigny}{University Clermont Auvergne, France}{alexandre.vigny@uca.fr}{https://orcid.org/0000-0002-4298-8876}{}

\authorrunning{S. Humeau, M.M. Kanté, D. Mock, T. Picavet, and A. Vigny} 

\Copyright{Samuel Humeau, Mamadou Moustapha Kanté, Daniel Mock, Timothé Picavet, and Alexandre Vigny} 

\begin{CCSXML}
<ccs2012>
   <concept>
       <concept_id>10002950.10003624.10003633.10010917</concept_id>
       <concept_desc>Mathematics of computing~Graph algorithms</concept_desc>
       <concept_significance>500</concept_significance>
       </concept>
   <concept>
       <concept_id>10003752.10003809.10010055</concept_id>
       <concept_desc>Theory of computation~Streaming, sublinear and near linear time algorithms</concept_desc>
       <concept_significance>300</concept_significance>
       </concept>
 </ccs2012>
\end{CCSXML}

\ccsdesc[500]{Mathematics of computing~Graph algorithms}
\ccsdesc[300]{Theory of computation~Streaming, sublinear and near linear time algorithms}

\keywords{Property testing, Sparsity, Bounded expansion, Treedepth} 

\category{} 

\relatedversion{} 




\nolinenumbers 

\EventEditors{John Doe}
\EventNoEds{2}
\EventLongTitle{42nd Conference on Very Important Topics (CVIT 2016)}
\EventShortTitle{CVIT 2016}
\EventAcronym{CVIT}
\EventYear{2016}
\EventDate{December 24--27, 2016}
\EventLocation{Little Whinging, United Kingdom}
\EventLogo{}
\SeriesVolume{42}
\ArticleNo{23}



\usepackage[ruled,vlined,linesnumbered]{algorithm2e}

\usepackage{algorithmic}
\usepackage{forest}
\usepackage{amsfonts}\usepackage{stmaryrd,amssymb,amsmath,amsthm,amsfonts}
\usepackage{xspace,xpunctuate}
\usepackage{mathtools}
\usepackage{graphicx}
\usepackage{pgf,tikz}
\usetikzlibrary{cd}
\usepackage[normalem]{ulem} 
\usepackage{listings}


\renewcommand{\H}{\ensuremath{{\mathcal{H}}}\xspace}

\newcommand{\Cc}{\ensuremath{{\mathcal{C}}}\xspace}
\newcommand{\Hh}{\ensuremath{{\mathcal{H}}}\xspace}
\newcommand{\Ff}{\ensuremath{{\mathcal{F}}}\xspace}
\newcommand{\e}{\ensuremath{\epsilon}}
\newcommand{\bfstraverse}{{\normalfont{\texttt{Random\-Bounded\-BFS}}}\xspace}
\newcommand\td{\operatorname{td}}


\begin{document}
\maketitle

\begin{abstract}
	In property testing, a tester makes queries to (an oracle for) a graph and, on a graph having or being far from having a property $P$, it decides with high probability whether the graph satisfies $P$ or not. Often, testers are restricted to a constant number of queries. While the graph properties for which there exists such a tester are somewhat well characterized in the dense graph model, it is not the case for sparse graphs.
	In this area, Czumaj and Sohler (FOCS’19) proved that~$H$-freeness (i.e.~the property of excluding the graph~$H$ as a subgraph) can be tested with constant queries on planar graphs as well as on graph classes excluding a minor.

	Using results from the sparsity toolkit, we propose a simpler alternative to the proof of Czumaj and Sohler, for a statement generalized to the broader notion of bounded expansion.
	That is, we prove that for any class $\Cc$ with bounded expansion and any graph~$H$, testing $H$-freeness can be done with constant query complexity on any graph $G$ in $\Cc$, where the constant depends on $H$ and $\Cc$, but is independent of $G$.

	While classes excluding a minor are prime examples of classes with bounded expansion, so are, for example, cubic graphs, graph classes with bounded maximum degree, graphs of bounded book thickness, or random graphs of bounded average degree.
\end{abstract}


\section{Introduction}

Given some data with a question, do you need to read all the data to answer, or can you parse only a fraction of it before answering correctly with high probability?
The domain of study associated to such problems is known as \emph{property testing}.
Intuitively, a property testing algorithm (or {\em tester}) takes a graph~$G$ and a property~$P$ as inputs, accepts with high probability if~$G$ satisfies~$P$, and rejects with high probability if~$G$ is ``\emph{far}'' from having property~$P$.
The notion of being ``far'' from having a property is often described via the fraction $\e$ of edge additions or edge removals needed to transform~$G$ into a graph~$G'$ satisfying~$P$. The fixed fraction $\e$ is often called the \emph{proximity parameter}. A tester accepting with probability one on graphs satisfying $P$ has \emph{one-sided error}. Otherwise, the tester has \emph{two-sided error}.

A tester does not have direct access to its input graph. Instead it uses an \emph{oracle}, i.e. a ``black box'' serving as a representation of the input graph to which \emph{queries} can be made. These depend on context and can be, for example: ``are these two vertices neighbors?'' or ``give me a random neighbor of this vertex''.

There exists several models formalizing the notions described above: the dense, sparse, and bounded degree models, and they differ on how ``being far from a property'' is defined and on the available queries.
The efficiency of testers is quantified by the number of queries, called \emph{query complexity}, made to a given representation of $G$. Ideally, the number of queries is constant: it depends on the property $P$ and the proximity parameter $\e$ of ``being far from $P$'', but not on $G$. In this case, $P$ is called {\em testable with constant query complexity}, or \emph{testable}.

More formally, in the {\em dense graph model} introduced by Goldreich, Goldwasser, and Ron~\cite{GoldreichGR98}, a graph $G$ is said to be \emph{$\e$-close} to a property $P$ if changing at most a fraction $\e$ of its adjacency matrix (i.e. at most~$\e\cdot|V(G)|^2$ many edges) transforms $G$ into a graph $G'$ satisfying~$P$. Otherwise $G$ is called \emph{$\e$-far} from $P$.
For this model, there is a good understanding of what properties are testable with two-sided error~\cite{AlonFNS09}, but also one-sided~\cite{AlonS08,AlonS08a}.
For example, every hereditary property (i.e. stable under taking induced subgraphs; such as bipartite, $H$-free or $k$-colorable) are one-sided error testable in the dense model.

In the bounded degree model \cite{GoldreichR02}, the input comes with an integer $d$ and every vertex of the input is assumed to have degree at most $d$.
In this case, it is known that restricting the class further to planar graphs, or graphs that exclude a minor (i.e. proper minor closed classes of graphs), and, more generally, to hyperfinite graph classes, allows every property to be testable with two-sided error~\cite{NewmanS13}.
For the more general case where only the degree constraint is considered, far fewer properties can be tested efficiently. On the one hand, it is known that testing FO-properties of the form $\exists\forall$ can be done efficiently. This includes testing for $H$-freeness. On the other hand some $\forall\exists$-FO formulas cannot be tested efficiently~\cite{AdlerKP24}. One of the most important property that cannot be tested is bipartiteness, which requires (on general graphs with bounded degree) $\Omega(\sqrt{|G|})$ many queries. See 
chapter nine of~\cite{Goldreich17}.

The sparse model is currently the less known model. It splits into slightly different variations, depending on whether from a given vertex the algorithm can query: 1) it's $i$th neighbor (upon input an integer $i$, and receives an error if the vertex has less than $i$ neighbors), or 2) a random neighbor (possibly several times the same), or 3) a random distinct neighbor (distinct from the outputs of previous queries on the same vertex). 
We refer to the introduction of \cite{CzumajS19} for more information and, similarly to the recent work in this
area~\cite{AwofesoGLLR25,AwofesoGLR25,CzumajMOS19,EsperetN22}. We focus here on the \emph{random neighbor model}, i.e., the second variation of the sparse model as presented above.
Observe that a query in this variation can be simulated using a constant number of queries in the other variations. Hence any property testable in the random
neighbor model is testable in every variation of the sparse model. In a similar manner, a query from any variation of the sparse model can be simulated using a
constant number of queries in the bounded degree model \cite[Section~1.1.2]{CzumajS19}. Hence, it is natural to seek generalization of results from the sparse and bounded degree models, starting with the random neighbor model. Formal definitions for the latter are recalled in the preliminaries.

\medskip\noindent{\bf State of the art.}
The most impactful work in the sparse model is the tester for subgraph freeness on planar graphs (and actually all classes excluding a fixed minor) of Czumaj and Sohler~\cite{CzumajS19}. Their algorithm simply repeats random breadth first searches with breadth and depth
independent of the input graph.
There are several possible directions to generalize this work. First, providing algorithms testing broader properties. This is the case of~\cite{EsperetN22}, which uses the algorithm of \cite{CzumajS19} as a subroutine and shows how to test any monotone property in the sparse model.

Another direction is to look at more general graph classes, extending beyond minor-free graph classes. Minor-free graph classes do not contain all cubic graphs, so that the results in the bounded degree model are separated from the ones of~\cite{CzumajS19}.
So far, such an extension has only been achieved for testing the non-existence of cycles~\cite{AwofesoGLLR25,AwofesoGLR25} (on classes with bounded $r$-admissibility), and as Esperet and Norin noted in \cite{EsperetN22}: {\it ``However, since the proof of~\cite{CzumajS19} itself strongly relies on edge-contractions (and thus on the graph class $\Cc$ being minor-closed), Theorem 2 [of~\cite{EsperetN22}] does not seem to be easily extendable beyond minor-closed classes.''}.

Furthermore, the proof of \cite{CzumajS19} is quite long and technical with key lemmas being dependent on their technical machinery. It is therefore hard to use corresponding insightful ideas individually. For examples, \cite{EsperetN22} uses Czumaj and Sohler's tester as a black box, while \cite{AwofesoGLLR25,AwofesoGLR25} do not seem to use any of their tools.

\medskip\noindent{\bf Our contribution.}
We prove that, for every proximity parameter, graph class $\cal C$ of bounded expansion, and property $P$ characterized by a finite set of forbidden subgraphs, $P$ is testable on graphs of $\cal C$ in the sparse model with constant query complexity and one-sided error.
This result generalizes that of ~\cite{CzumajS19} from proper minor-closed graph classes to graph classes $\cal C$ of bounded expansion.
In particular, we get that such properties $P$ are testable on graph classes of bounded degree and proper topologically minor-closed classes, generalizing results from the bounded degree model as well. A part of the hierarchy of sparse graph classes including these is represented in \Cref{pic:graphclasses}.

We make sure to keep lemmas as independent of each other as possible, clearly stating what the requirements of each statement are. While some lemmas work on graphs from a bounded expansion class, some only require their degeneracy to be bounded, and some hold for arbitrary graphs. Lemmas and definitions which are reformulations or slight variations of statements of \cite{CzumajS19} are explicitly pointed out as such.

\begin{figure}[t]
  \centering
  \includegraphics[width=0.9\linewidth]{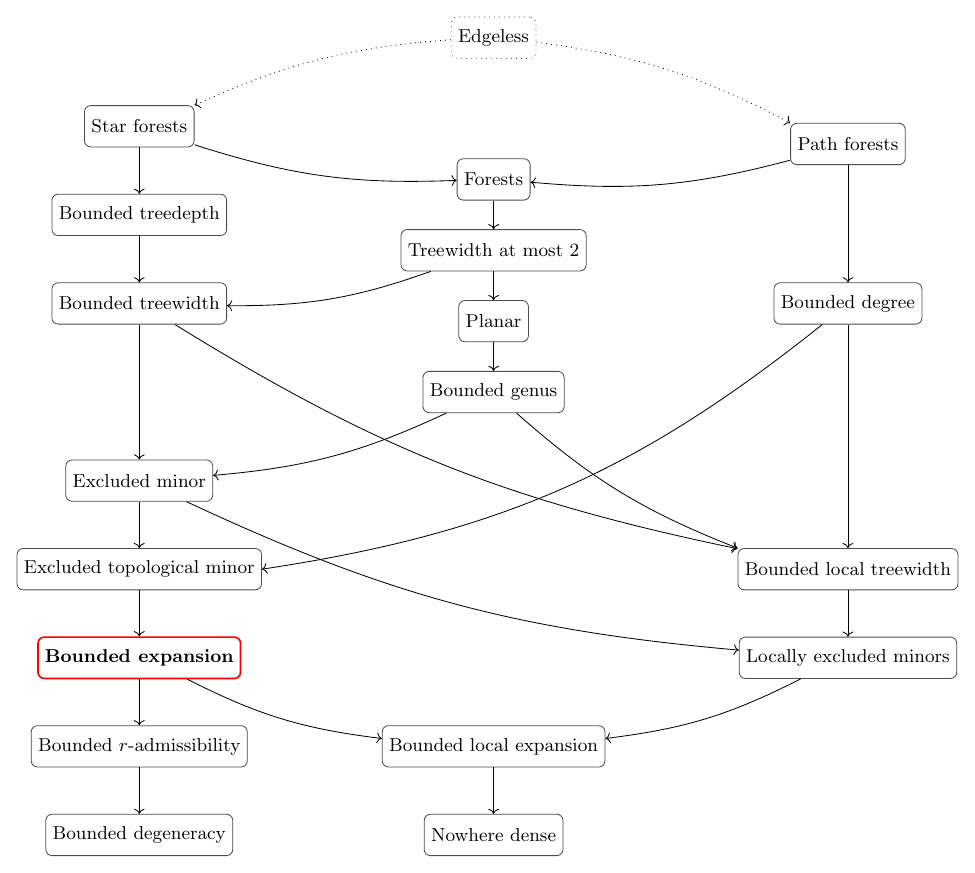}
  \caption{Families of sparse graph classes and their pairwise inclusions.}\label{pic:graphclasses}
\end{figure}

\smallskip\noindent{\bf Organization and techniques.}
To prove our result, we first proceed in the preliminaries in the same manner as Czumaj and Sohler \cite{CzumajS19}: we focus on the testability of $H$-freeness for a fixed graph, and we use the same tester: repeated bounded random breadth first searches accepting if a copy of $H$ is found by the searches, and rejecting otherwise. One-sidedness is direct, so most of the proof is to handle graphs which are $\e$-far from being $H$-free. As pointed out by Czumaj and Sohler, such a graph $G$ contain $\Omega(\e|V(G)|)$ edge-disjoint copies of $H$.

Hence, during our proof procedure, we handle and maintain a ``sufficiently large'' set $\mathcal H$ of edge-disjoint copies of $H$ in $G$.
With each lemma, we restrict $\mathcal H$ to a subset of itself, imposing further structure on the subgraph $G[\mathcal H]$ induced by the copies of $H$ in $\mathcal H$.
In \cref{sec:reduction_subgraph}, we study a property on $\Hh$ enabling to transfer the probability of finding $H$ in $G[\Hh]$ to that of finding $H$ in $G$.
Later in \cref{sec:from-BE-to-TD}, we explain how to build a set $\Hh$ such that $G[\Hh]$ has bounded treedepth.
Using that having bounded treedepth implies excluding a minor, together with the result of~\cite{CzumajS19}, we prove our main contribution in \cref{sec:ccl-reduction}: \cref{lem:testing_H_freeness}.

For the sake of completeness, and in order not to rely on~\cite{CzumajS19}, we show how to test for $H$-freeness on graphs with bounded treedepth in \cref{sec:bounded_treedepth}. This is done by first analyzing the structure of $H$ with regard to tree embeddings of bounded depth, and later, by analyzing how parts of edge-disjoint copies of $H$ can be used to reconstruct a copy of $H$ in a graph of bounded treedepth. With \cref{lem:testing_H_freeness_treedepth}, we finally conclude that $H$-freeness can be tested with constant query complexity and one-sided error in the random neighbor model on graphs of bounded treedepth. 


\section{Preliminaries}

\subsection{Graph notations}
We denote the vertex and edge set of a graph $G$ by $V(G)$ and $E(G)$, respectively. We write~$|G|$ for the number of vertices of $G$.
We assume that graphs are finite, simple and undirected.

Given a graph $G$ and a set of vertices $A$ of $G$, we denote by $G[A]$ the {\em subgraph} of $G$ induced by the vertices in $A$ (i.e.~keeping every edge
incident to two vertices of $A$). For a set $F$ of edges of a graph $G$, we denote by $G-F$ the subgraph of $G$ with vertex set $V(G)$ and edge set
$E(G)\setminus F$. 

The {\em degree} $\deg^{G}(v)$ of a vertex $v$ in a graph $G$ is the number of vertices $v$ shares an edge with in $G$. The {\em degree} $\Delta(G)$ of a graph $G$ is the maximum degree of its vertices.

\begin{definition}
  A \emph{copy} $h$ of a graph $H$ in a graph $G$ is a subgraph isomorphism, i.e. an injection $h \colon V(H) \to V(G)$ such that $(u,v) \in E(H) \implies (h(u), h(v)) \in E(G)$ for every~$u,v\in V(H)$.

  Given a set $\Hh$ of copies of $H$ in $G$, $G[\mathcal H]$ denotes the subgraph of $G$ defined by the union of the subgraphs of $G$
  image of elements of $\Hh$.
\end{definition}

A graph $G$ is said to be \emph{$H$-free} if it does not contain $H$ as a subgraph, i.e.,~there are no copies of~$H$ in $G$. Note that replacing subgraph by induced subgraph is of little impact as an induced copy of $H$ is a copy of $H$, and a copy of $H$ is and induced copy in $G'$ after removing at most $|E(H)|$-many edges from $G$. 

\subsection{Sparsity}

We start with the definition of treedepth, and then move on to the more general notion of bounded expansion introduced by Nešetřil and Ossona de Mendez. We refer to their more comprehensive book on the subject for more information \cite{sparsity}. 

A {\em tree order} $\le_T$ on a vertex set $V$ is a partial order such that for every vertex $v$, the set of elements $\le_T$-smaller than $v$ is totally ordered. A \emph{root} with regard to $\le_T$ is a $\le_T$-minimal element, and a \emph{leaf} with regard to $\le_T$ is a $\le_T$-maximal element.
The {\em level} of a vertex $v$ in a tree order is the number of elements smaller than $v$. The {\em depth} of a tree order is the maximal level of its vertices.
In particular, roots have level zero.
A tree order $\le_T$ on the vertex set of a graph is called a {\em tree embedding} (or treedepth embedding) if every pair of adjacent vertices $(u,v)$ is $\le_T$-comparable (i.e.~either $u\le_T v$, or $v\le_T u$).

\begin{definition}
  The {\em treedepth} $\td(G)$ of a graph $G$ is the minimum depth of its tree embeddings.

  For 
  an
  integer $d$, a graph class $\Cc$ is said to have {\em treedepth} at most $d$ if the treedepth of each of its graphs is bounded by $d$;
$\Cc$ has \emph{bounded treedepth} if there exists such a constant $d$.
\end{definition}

The notion of bounded expansion was introduced by Ne\v{s}et\v{r}il and Ossona de Mendez~\cite{NesetrilM08}. It has many equivalent characterizations.
As our proofs rely only on the existence of low-treedepth colorings, we only recall the corresponding characterization.

A \emph{$p$-treedepth coloring} of a graph $G$ is a coloring of the vertices of $G$ such that for every set $S$ of $i \leq p$ colors, the subgraph of $G$ induced by its vertices colored by elements of $S$ has treedepth at most $i$.

\begin{definition}[See Theorem 7.1 of \cite{NesetrilM08}]
  \label{def:bounded_expansion}
  A class of graphs $\cal C$ has {\em bounded expansion} if and only if there exists a function $f$ such that for every graph $G \in \cal C$ and all $p \in \mathbb N$, $G$ admits a $p$-treedepth coloring using at most $f(p)$ colors.
\end{definition}

Note that, while such colorings can be computed efficiently, our proof only uses their existence. There are many variations on such coloring, often grouped in
the notion of {\em generalized coloring numbers}. This includes among other weak-coloring, $p$-centered coloring, and admissibility coloring. The latter is used in \cite{AwofesoGLLR25,AwofesoGLR25} as ($r$-) admissibility numbers. In our case, we sometimes use the notion of degeneracy, which is equivalent to the notion of~1-admissibility.

\begin{definition}
  A graph has \emph{degeneracy} $d$ if every subgraph has a vertex of degree at most~$d$.
  A class of graphs has \emph{bounded degeneracy} if there exists a constant $d$ such that every graph in the class has degeneracy at most $d$.
\end{definition}

Classes of bounded expansion are well-known to have bounded degeneracy~\cite[Fact 3.2]{NesetrilM08}.

\subsection{Property testing and complexity}

For a positive constant $\e>0$, a graph $G$ is {\em $\e$-far from a property $P$} if more than $\e\cdot|G|$ many edges have to be deleted from or inserted to $G$ to obtain a graph with property $P$. The parameter $\e$ is called the \emph{proximity parameter}.

A {\em tester} (for a property $P$) is an algorithm that inspects only a part of an input graph~$G$ via a restricted set of \emph{queries} to an \emph{oracle}, accepts with probability at least $2/3$ if $G$ satisfies~$P$, and rejects with probability at least $2/3$ if $G$ is~$\e$-far from~$P$.
A tester has {\em one-sided error} if it accepts all graphs with property $P$ with probability $1$.

In \cite{CzumajS19}, several oracle access models are presented, but the focus is put on the random neighbor model. We proceed similarly.

Initially, a tester knows the vertices of its input graph $G$; it accesses the input graph by making {\em random neighbor queries} to its oracle: given a vertex $v \in V(G)$, the oracle returns a vertex chosen independently and uniformly at random from the set of all neighbors of $v$. No other queries are available to testers.

The {\em query complexity} of a tester is the number of queries it makes before reaching a conclusion.

In our work, we only consider \emph{constant query complexities}: the number of queries is always independent of the input graph.
More precisely, when testing for $H$-freeness with proximity parameter $\e$, the number $q$ of queries can depend on $H$, $\e$, and additional parameters (such as a fixed class $\Cc$ the graph $G$ belongs to). Hence, $q=O_{H,\e,\Cc}(1)$, that is, for some function~$f(H,\e,\Cc)$, we have $q\le f(H,\e,\Cc)$. This notation extends easily, by ignoring multiplicative factors depending on indices; for example stating $|\mathcal H|=\Omega_{\e,H}(|V(G)|)$
for a set of copies of~$H$ in $G$.

\subsection{Random Bounded BFS and the tester}
First we show that testing for $\mathbb H$-freeness with $\mathbb H$ a
set of graphs is equivalent to testing for $H$-freeness for a single
graph $H$. This is identical to a part of \cite[Section 8]{CzumajS19}.
We only prove it for one-sided error testers since this is what we later investigate. 

First, we relate $\e$-farness from being
$\mathbb H$-free or $H$-free for $H$ some graph in
$\mathbb H$:
\begin{proposition}
  Let $G$ be a graph and $\mathbb H$ a finite set of graphs.

  If $G$ is $\e$-far from being $\mathbb H$-free, then there
  exists a graph $H\in\mathbb H$ such that $G$ is $\e/|\mathbb H|$-far
  from being $H$-free.
\end{proposition}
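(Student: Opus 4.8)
The plan is to argue by contraposition. Assume that for \emph{every} $H \in \mathbb H$ the graph $G$ is \emph{not} $\e/|\mathbb H|$-far from being $H$-free; I will conclude that $G$ is not $\e$-far from being $\mathbb H$-free. By the definition of farness, for each $H \in \mathbb H$ there are an edge set $A_H$ to add and an edge set $D_H$ to delete, with $|A_H| + |D_H| \le (\e/|\mathbb H|)\cdot|G|$, such that the graph $G_H$ obtained from $G$ by deleting the edges of $D_H$ and adding the edges of $A_H$ is $H$-free.

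The key observation is that $H$-freeness is closed under taking subgraphs, and that, for $H$-freeness, edge deletions alone are enough. Indeed, the graph $G-D_H$ has the same vertex set as $G_H$ and edge set $E(G)\setminus D_H \subseteq E(G_H)$, so $G-D_H$ is a subgraph of $G_H$ and is therefore itself $H$-free. Hence, for each $H \in \mathbb H$, the deletion set $D_H$ alone already witnesses that $G$ is $(\e/|\mathbb H|)$-close to being $H$-free.

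Now set $F := \bigcup_{H \in \mathbb H} D_H$. For every $H \in \mathbb H$, the graph $G-F$ is a subgraph of $G-D_H$, hence $H$-free; as this holds for all $H \in \mathbb H$ simultaneously, $G-F$ is $\mathbb H$-free. Moreover, by the union bound, $|F| \le \sum_{H \in \mathbb H} |D_H| \le \sum_{H \in \mathbb H}\bigl(|A_H|+|D_H|\bigr) \le |\mathbb H|\cdot(\e/|\mathbb H|)\cdot|G| = \e\cdot|G|$. So at most $\e\cdot|G|$ edge deletions turn $G$ into an $\mathbb H$-free graph, which means $G$ is not $\e$-far from being $\mathbb H$-free, contradicting the hypothesis.

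The argument is short, and the only point that needs care — the reason one cannot simply throw the additions into the union as well — is the monotonicity step: one must first pass from each witness $G_H$ to the purely-deletion witness $G-D_H$ before taking the union of deletion sets, because an edge added while fixing one graph $H$ could create a copy of a different graph $H' \in \mathbb H$, so the addition sets do not combine. Once the reduction to deletions is in place, closure of $\mathbb H$-freeness under subgraphs and a plain union bound finish the proof.
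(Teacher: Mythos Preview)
Your proof is correct and follows essentially the same contrapositive/union-of-deletion-sets argument as the paper. You are in fact more careful than the paper on one point: you explicitly justify, via monotonicity of $H$-freeness, why the addition sets $A_H$ can be discarded before taking the union, whereas the paper's proof simply asserts that at most $\e|G|/|\mathbb H|$ edge \emph{removals} suffice for each $H$ without spelling this out.
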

\begin{proof}
  For the sake of contradiction, assume the statement is false. Hence,
  for each graph $H$ of $\mathbb H$, it is possible to remove at most
  $\e|G|/|\mathbb H|$ edges from $G$ to make it $H$-free. If we do
  this for all graphs of $\mathbb H$, then we remove at most $\e|G|$
  edges from $G$ and make it $\mathbb H$-free, a contradiction.
\end{proof}

\begin{proposition}\label{prop:H-Hh}
  Let $\mathbb H$ be a finite set of graphs and $\mathcal C$ a graph class.

  If, for every graph $H$ of $\mathbb H$ and $\e>0$,
  $H$-freeness is testable with constant query complexity,
  proximity parameter $\e$, and one-sided error in the random neighbor
  model on graphs of $\cal C$, then so is $\mathbb H$-freeness.
\end{proposition}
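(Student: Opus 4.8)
The plan is to build a tester for $\mathbb H$-freeness by running, in parallel, one tester for each $H \in \mathbb H$. First I would invoke the preceding proposition: if $G$ is $\e$-far from being $\mathbb H$-free, then some $H \in \mathbb H$ witnesses that $G$ is $(\e/|\mathbb H|)$-far from being $H$-free. This suggests the following algorithm on input $G \in \mathcal C$: for each $H \in \mathbb H$, run the assumed one-sided tester for $H$-freeness on $G$ with proximity parameter $\e/|\mathbb H|$; accept if and only if all these sub-testers accept, and reject as soon as one of them rejects.

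For one-sidedness, I would observe that if $G$ is $\mathbb H$-free then it is $H$-free for every $H \in \mathbb H$, so each sub-tester accepts with probability $1$ by its one-sided guarantee, hence the combined tester accepts with probability $1$.

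For soundness, suppose $G$ is $\e$-far from being $\mathbb H$-free. By the preceding proposition there is some $H_0 \in \mathbb H$ such that $G$ is $(\e/|\mathbb H|)$-far from being $H_0$-free, so the sub-tester for $H_0$ rejects with probability at least $2/3$; since the combined tester rejects whenever any sub-tester does, it also rejects with probability at least $2/3$. Finally, its query complexity is the sum of the $|\mathbb H|$ constant query complexities (each depending only on the corresponding $H$, on $\e/|\mathbb H|$, and on $\mathcal C$), hence a constant depending on $\mathbb H$, $\e$ and $\mathcal C$ but not on $G$.

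I do not expect a genuine obstacle here; the only point requiring minor care is that no probability amplification is needed, because the single ``good'' sub-tester already fails with probability at most $1/3$, and its rejection alone forces the whole algorithm to reject — so taking a union bound over the (possibly many) sub-testers that might erroneously reject is unnecessary.
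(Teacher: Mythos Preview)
Your proposal is correct and matches the paper's proof essentially line for line: run each $H$-freeness sub-tester with proximity parameter $\e/|\mathbb H|$, reject if any sub-tester rejects, and use the preceding proposition to locate the single $H_0$ whose sub-tester already rejects with probability at least $2/3$. Your closing remark that no amplification or union bound is needed is exactly the point the paper's argument relies on implicitly.
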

\begin{proof}
  Given $G,\e,\mathbb H$, consider the algorithm iteratively testing $H$-freeness (with proximity parameter~$\e/|\mathbb{H}|$) for each $H\in\mathbb H$. If the sub-tester for one of the graph $H\in\mathbb H$ rejects, then the algorithm rejects $G$. Otherwise, the algorithm accepts $G$. The number of queries is constant (depending only on $\e$ and~$\mathbb H$, and possibly $\mathcal C$).

  Note that if $G$ is indeed $\mathbb H$-free, since sub-testers are one-sided, they all accept, and therefore our algorithm accepts. Otherwise, if $G$ is $\e$-far from $\mathbb H$-freeness, by \cref{prop:H-Hh}, there is exists $H\in\mathbb H$ such that $G$ is $\e/|\mathbb H|$-far from being $H$-free. 
  With probability at least~$2/3$ the sub-tester for $H$ rejects, and therefore our algorithm rejects with probability at least~$2/3$ too.
\end{proof}

Our tester for $H$-freeness is relatively straightforward. It is the same as that of \cite{CzumajS19}, except that we provide formal details on how disconnected graphs are handled, whereas this particular case is only discussed informally in \cite[Section 8]{CzumajS19}. Our tester is built on the subroutine for Random Bounded Breadth First Searches: \bfstraverse described below.

\begin{algorithm}[H]
\caption{$\bfstraverse(G,t,d)$}
\KwIn{Graph $G$, integers $t,d$}

Pick a vertex $v$ uniformly at random in $G$\;
$\texttt{current\_vertices} \gets \{v\}; \quad \texttt{final\_edges} \gets \emptyset$\;
$\texttt{next\_vertices}\gets\emptyset;\quad \texttt{seen\_vertices} \gets \emptyset$\;

\For{$i = 1$ \KwTo $t$}{
    \ForEach{$u \in \normalfont{\texttt{current\_vertices}}$}{
        perform $d$ random neighbor queries on $u$\;
        \ForEach{edge $uw$ found by a query}{
            \If{$w \notin (\normalfont{\texttt{seen\_vertices}}\cup \normalfont{\texttt{current\_vertices}})$}{
                add $w$ to $\texttt{next\_vertices}$\;
            }
            add $uw$ to $\texttt{final\_edges}$\;
        }
        add $u$ to $\texttt{seen\_vertices}$\;
    }
    $\texttt{current\_vertices} \gets \texttt{next\_vertices}; \quad \texttt{next\_vertices} \gets \emptyset$\;
}

\Return The subgraph of $G$ induced by edges in $\texttt{final\_edges}$\;
\end{algorithm}

Hence, \bfstraverse does a breadth first search, with bounded breadth and depth, and at random.
Our final algorithm $\texttt{Tester}(n,H,\e,G)$ rejects when $|H|>|G|$. Otherwise it removes its isolated vertices from $H$ and repeats the following $n$ times (where $n$ should be independent of $G$ to ensure constant query complexity): \\
1. for every connected component~$C$ of $H$, make a call to $\bfstraverse(G,|H|,\Delta(H))$; \\
2. if the union of the subgraphs returned by the calls to \bfstraverse contain a copy of $H$, then reject. If at the end of the $n$ iterations the algorithm has not rejected $G$, then accept.

Observe that a call to \bfstraverse associated to a connected component $C$ of $H$ need not find a copy of $H$ in the algorithm. Having one call per component ensures high probability.
Concerning isolated vertices of $H$, once a copy of the non-isolated parts of $H$ is found, any other vertex of $G$ can be used as a copy of an isolated vertex of $H$. And since we have $|G|>|H|$ there are such vertices. 


\section{Testing \texorpdfstring{$H$}{H}-freeness on graphs of bounded
  expansion}\label{sec:main}

In this section, we prove our main result, Theorem~\ref{lem:testing_H_freeness}:
$H$-freeness can be tested in the random neighbor model with
constant query complexity and one-sided error on a class of graphs of
bounded expansion for any fixed graph $H$.

The proof is divided into multiple lemmas which maintain a set $\mathcal H$ of copies of $H$ in~$G$. 
Instead of analyzing the success probability of the tester on $G$, we analyze it on $G[\mathcal H]$.\linebreak
With each lemma, the set $\mathcal H$ is refined to obtain more properties.
While $\mathcal H$ shrinks with each lemma, we ensure that it remains ``sufficiently large''---so the success probability of the tester on $G[\mathcal H]$ lifts to $G$.
Essentially, the size of $\mathcal H$ is always linear in the graph size~$|G|$, where multiplicative factors depending only on $\e, |H|$ and graph structure parameters are considered constant.
At the end of this process, we obtain a set $\mathcal H$ such that $G[\mathcal H]$ has bounded treedepth. 
The main result follows then from \cite{CzumajS19}, as classes of bounded treedepth are properly minor-closed.

However, our work does not end there. 
We give a self-contained, simple-to-follow proof that $H$-freeness is testable on classes of bounded treedepth in~\Cref{sec:bounded_treedepth}.

\subsection{Reduction to subgraphs induced by many edge-disjoint
  copies of \texorpdfstring{$H$}{H}}\label{sec:reduction_subgraph}

In this section, we show that any graph $G$ that is $\e$-far from being $H$-free, has an associated set $\mathcal H$ of copies of $H$ with basic properties ensuring we can analyze the behavior of the tester on the structurally better-behaved graph $G[\mathcal H]$ instead of $G$.
More precisely, the set $\mathcal H$ has the following properties:
\begin{itemize}
	\item the copies in $\mathcal H$ are edge-disjoint,
	\item the number of copies is ``sufficiently large'', that is, linear in $|G|$, and,
	\item the degrees of a vertex of $G[\mathcal H]$ in $G[\mathcal H]$ and $G$ are of the same order of magnitude.
\end{itemize} 
The findings in this section are generalizations or distillations of results from \cite{CzumajS19} into our setting.

\begin{lemma}\label{lem:e-far-edge-disjoint} 
  Let $H$ and $G$ be graphs such that $G$ is $\e$-far from
  being $H$-free.

  There exists a set $\mathcal H$ of edge-disjoint copies of $H$ in $G$ of
  cardinality at least $\e|G|/|E(H)|$.
\end{lemma}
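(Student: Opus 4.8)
The statement is essentially a greedy-extraction argument, so the plan is to build $\mathcal H$ by repeatedly pulling out copies of $H$ and deleting their edges from $G$, stopping only when no copy remains. Formally, I would set $G_0 := G$ and, as long as $G_i$ contains a copy of $H$, pick one such copy $h_{i+1}$, let $F_{i+1} \subseteq E(G_i)$ be its $|E(H)|$ edges, and set $G_{i+1} := G_i - F_{i+1}$. This process terminates after some number $k$ of steps because $|E(G_i)|$ strictly decreases; the resulting collection $\mathcal H := \{h_1,\dots,h_k\}$ is edge-disjoint by construction, since the edges of $h_{i+1}$ are all present in $G_i$ and hence were not used by any earlier $h_j$ with $j \le i$.

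The remaining point is the cardinality bound $k \ge \e|G|/|E(H)|$. When the process stops, the graph $G_k$ is $H$-free, and $G_k$ was obtained from $G$ by deleting exactly $\sum_{i=1}^k |F_i| = k\cdot|E(H)|$ edges (the $F_i$ are disjoint). Since $G$ is $\e$-far from being $H$-free, transforming $G$ into an $H$-free graph by edge deletions alone requires deleting more than $\e|G|$ edges; in particular the $k\cdot|E(H)|$ deletions we performed must number more than $\e|G|$, i.e. $k\cdot|E(H)| > \e|G|$, giving $k > \e|G|/|E(H)|$, which is stronger than the claimed bound. (One small wrinkle: the definition of $\e$-far allows both insertions and deletions, but any sequence of insertions and deletions turning $G$ into an $H$-free graph can be replaced by a sequence of deletions only — just delete all the edges that the original sequence would have inserted-then-kept, plus all edges it deleted; more simply, $G_k$ itself is $H$-free and a subgraph of $G$, so the deletion-only distance is at most the number of edges removed, and the $\e$-far hypothesis applies to this specific transformation.)

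I do not expect a genuine obstacle here; the argument is a textbook greedy/packing bound and is explicitly flagged in the excerpt as a distillation of a remark from \cite{CzumajS19} ("such a graph $G$ contains $\Omega(\e|V(G)|)$ edge-disjoint copies of $H$"). The only thing worth stating carefully is the reduction from the insertion-and-deletion notion of $\e$-farness to a pure-deletion lower bound, so that the count of edges we actually removed is compared against the right quantity. Everything else — termination, edge-disjointness, the arithmetic $k \ge \e|G|/|E(H)|$ — is immediate.
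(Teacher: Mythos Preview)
Your proposal is correct and essentially identical to the paper's proof: both remove the edges of the copies found so far and invoke $\e$-farness to guarantee another copy remains whenever fewer than $\e|G|/|E(H)|$ copies have been extracted. The paper merely phrases the greedy process as an inductive step (from $k$ to $k+1$ copies) rather than as an algorithm running to exhaustion, and it does not spell out the insertion/deletion wrinkle you mention, but the argument is the same.
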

This lemma is a reformulation of a part of \cite[Lemma 20]{CzumajS19} and is true on arbitrary graphs.
\begin{proof}
  We prove, given a set $\mathcal H$ of $k<\e|G|/|E(H)|$
  edge-disjoint copies of $H$ in $G$, that we can build a set of $k+1$
  edge-disjoint copies of $H$ in $G$.

  Removing all edges of $G[\mathcal H]$ from $G$ removes
  $k|E(H)|<\e|G|$ edges from $G$. Since $G$ is~$\e$-far
  from being $H$-free, at least one
  copy of $H$ remains in $G-E[G[\mathcal H]]$. Adding it to $\mathcal H$ provides a set of $k+1$
  edge-disjoint copies of $H$ in $G$.
\end{proof}

We define a condition on subgraphs of $G$ and prove that the
probability of finding $H$ on a subgraph of $G$ satisfying this
condition lifts to $G$, up to a constant factor. This condition
relates to \cite[Property (a'), defined in Lemma 47]{CzumajS19}.
\begin{definition}
  Let $0<c\le 1$ be a constant. A vertex $v$ in a subgraph $G'$ of $G$ is
  called \emph{$c$-degree preserved} in $G'$ if
  $\deg^{G'}(v)\ge c\cdot \deg^G(v)$. If all vertices of $G'$ are
  $c$-degree preserved, and $|G'|\ge c|G|$, then $G'$ is called
  \emph{$c$-degree preserving}.
\end{definition}

\begin{restatable}[$\star$]{lemma}{lemHighProba}\label{lem:high-proba}
  Let $H$ and $G$ be graphs, $0<c\le 1$ a positive constant, $G'$ a
  $c$-degree preserving subgraph of $G$.
  If \normalfont{\bfstraverse}$(G',d,t)$ returns a subgraph $F$ of
  $G'$ with probability at least~$q$, then \normalfont{\bfstraverse}$(G,d,t)$ also returns $F$ with
  probability at least $c^{d^t+1}q$.
\end{restatable}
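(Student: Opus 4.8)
The plan is to couple the two executions of \bfstraverse, one on $G'$ and one on $G$, by the random choices they make: the initial random vertex $v$, and, for every vertex $u$ explored at distance $<t$ from $v$, the sequence of $d$ random neighbor queries issued from $u$. I would set up the coupling so that whenever the run on $G'$ queries a neighbor of $u$ and gets the answer $w$ (necessarily $w\in V(G')$ since $G'$ is a subgraph and queries from $u$ in $G'$ only return $G'$-neighbors), the run on $G$ issuing the same-indexed query from $u$ also returns $w$. The key observation is that the run on $G'$ returns exactly the fixed subgraph $F$ iff a particular finite event $\mathcal{E}$ on its random bits occurs (the initial vertex lands in $V(F)$ appropriately and the queried edges are exactly those recorded in \texttt{final\_edges}); conditioned on $\mathcal{E}$, the run on $G$ queries the same ordered list of edges, hence also returns $F$. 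So it suffices to lower-bound $\Pr[\text{run on }G\text{ replays }\mathcal{E}]$ in terms of $\Pr[\text{run on }G'\text{ realizes }\mathcal{E}] \ge q$.

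The first step is to bound the number of queries that matter. The initial vertex is chosen uniformly from $V(G)$; since $|V(G')|\ge c|V(G)|$, the probability the run on $G$ picks the same starting vertex as a "good" run on $G'$ is at least $c$ times the corresponding probability on $G'$ — more precisely, I would argue that summing over the (at most one) successful starting vertex and over the query outcomes, each uniform choice over $V(G)$ costs a factor $\ge c$ relative to the uniform choice over $V(G')$. For the neighbor queries: from a vertex $u$, a query on $G'$ picks uniformly among $\deg^{G'}(u)$ neighbors, while on $G$ it picks uniformly among $\deg^{G}(u)$ neighbors; since $u$ is $c$-degree preserved, $\deg^{G'}(u)\ge c\deg^{G}(u)$, so hitting a prescribed neighbor on $G$ has probability $1/\deg^G(u) \ge (c/\deg^{G'}(u)) \cdot 1 \ge c \cdot \frac{1}{\deg^{G'}(u)}$; i.e. each query costs a multiplicative factor $c$. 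The second step is to count how many such random choices a single run of \bfstraverse$(\cdot,d,t)$ makes along a path that can contribute to $F$: the BFS explores depth $t$ with branching $d$, so the number of explored vertices is at most $1 + d + d^2 + \cdots + d^{t-1} \le d^t$ (for $d\ge 2$; the edge cases $d\le 1$ are trivial), and from each such vertex at most $d$ queries are made — but actually the relevant count is the number of vertices from which we query, bounded by $\sum_{i=0}^{t-1} d^i$, giving at most $d^t$ independent neighbor-choices, plus the one initial vertex choice. Multiplying the per-choice factors $c$ over all of them yields the claimed $c^{d^t+1}$.

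The main obstacle I anticipate is making the coupling argument rigorous in the presence of \bfstraverse's bookkeeping: the set of vertices from which queries are issued, and even the multiset \texttt{final\_edges}, are themselves random, so "the same queries" must be defined carefully. I would handle this by conditioning on the transcript: fix the full sequence of (vertex, query-index, answer) triples that a run on $G'$ would produce in order to output $F$; this is a finite deterministic object, and the event that the run on $G$ produces this exact transcript is a conjunction of independent events (one uniform hit per triple, plus the start vertex), each of probability at least $c$ times its $G'$-counterpart. Summing $\Pr[\text{transcript }\tau\text{ on }G] \ge c^{d^t+1}\Pr[\text{transcript }\tau\text{ on }G']$ over all transcripts $\tau$ that yield output $F$ gives the result. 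One subtlety to check is that a transcript valid on $G'$ is always executable on $G$ — this holds because every edge of $G'$ is an edge of $G$ and every $G'$-neighbor of $u$ is a $G$-neighbor of $u$, so no queried answer is "illegal" in $G$; conversely the run on $G$ might branch out of $V(G')$, but we only need a lower bound on the probability of the one good family of transcripts, so extra reachable vertices in $G$ are harmless.
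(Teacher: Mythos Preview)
Your approach is correct and essentially the same as the paper's: both compare the runs on $G$ and $G'$ choice-by-choice, observing that the initial vertex pick costs one factor of $c$ (from $|V(G')|\ge c|V(G)|$) and each of the at most $d^t$ neighbor queries costs another factor of $c$ (from $\deg^{G'}(u)\ge c\deg^G(u)$), then multiply. Your explicit transcript/coupling formulation and the care you take with the bookkeeping subtlety are slightly more rigorous than the paper's informal ``same edges in the same order'' phrasing, but the underlying argument is identical.
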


Again, this lemma is similar to \cite[Lemma 47]{CzumajS19}, the main
difference being that in \cite{CzumajS19}, the authors assume $G'$ to
be a spanning subgraph instead of having $|G'|\ge c|G|$ in the definition
of $c$-degree preserving subgraph.
See \cref{sec:appendix-missing-proof} for the detailed proof.

Finally we prove that degenerate graphs containing a set
$\mathcal H$ of many edge-disjoint copies of $H$ also contain a ``significantly large'' set
$\mathcal H'\subseteq\mathcal H$ such that $G[\mathcal H']$ is $c$-degree preserving.

\begin{restatable}[$\star$]{lemma}{lemCopyHighDegree}\label{lem:copy-high-degree}
  Let $H$ be a graph with at least one edge and $G$ a $d$-degenerate graph 
  for a constant $d>0$.
  Furthermore, let $\mathcal H$ be a set of edge-disjoint copies of a graph $H$ in $G$ such that~$|\mathcal H|\ge\alpha|G|$.
  Then, there exists $\mathcal H'\subseteq \mathcal H$ such that $G[\mathcal H']$ is $(\alpha/4d)$-degree preserving and~$|\mathcal H'|\ge\alpha|G|/2$.
\end{restatable}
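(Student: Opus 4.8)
### Proof proposal

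The plan is to find a large subset $\mathcal{H}' \subseteq \mathcal{H}$ by a two-stage cleaning argument: first discard copies that use a low-degree (``bad'') vertex, then argue that the surviving copies form a degree-preserving subgraph. The key quantitative input is that in a $d$-degenerate graph the number of edges is at most $d|G|$, so copies of $H$ that touch high-degree vertices are scarce.

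First I would make precise the notion of a \emph{bad} vertex. Call a vertex $v \in V(G[\mathcal{H}])$ \emph{bad} if $\deg^{G[\mathcal{H}]}(v) < \frac{\alpha}{4d}\deg^G(v)$; otherwise call it \emph{good}. The first step is to bound the number of copies in $\mathcal{H}$ that contain at least one bad vertex. For a fixed bad vertex $v$, the number of copies of $H$ in $\mathcal{H}$ using $v$ is at most $\deg^{G[\mathcal{H}]}(v)$ (here I use that $H$ has at least one edge, so every copy uses at least one edge at each of its image vertices, and the copies are edge-disjoint, so the edges at $v$ in $G[\mathcal{H}]$ are partitioned among the copies through $v$; hence at most $\deg^{G[\mathcal{H}]}(v)$ copies pass through $v$). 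Summing over all bad vertices, the number of copies touching a bad vertex is at most $\sum_{v \text{ bad}} \deg^{G[\mathcal{H}]}(v) < \frac{\alpha}{4d}\sum_{v} \deg^G(v) = \frac{\alpha}{4d}\cdot 2|E(G)| \le \frac{\alpha}{4d}\cdot 2d|G| = \frac{\alpha}{2}|G|$, using $d$-degeneracy in the form $|E(G)| \le d|G|$. So at most $\frac{\alpha}{2}|G|$ copies of $\mathcal{H}$ touch a bad vertex.

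Next I would let $\mathcal{H}'$ be the set of copies in $\mathcal{H}$ that use only good vertices. By the previous step $|\mathcal{H}'| \ge |\mathcal{H}| - \frac{\alpha}{2}|G| \ge \alpha|G| - \frac{\alpha}{2}|G| = \frac{\alpha}{2}|G|$, giving the size bound. It remains to check that $G[\mathcal{H}']$ is $(\alpha/4d)$-degree preserving. Every vertex $v$ of $G[\mathcal{H}']$ is good, i.e. $\deg^{G[\mathcal{H}]}(v) \ge \frac{\alpha}{4d}\deg^G(v)$; but I actually need $\deg^{G[\mathcal{H}']}(v) \ge \frac{\alpha}{4d}\deg^G(v)$, and $G[\mathcal{H}']$ is a subgraph of $G[\mathcal{H}]$, so this is not immediate. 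Here is where I expect the main subtlety: I should define ``good'' relative to $G[\mathcal{H}']$ rather than $G[\mathcal{H}]$, or note that the edges at $v$ in $G[\mathcal{H}]$ not present in $G[\mathcal{H}']$ belong only to copies through $v$ that were discarded (because they touched some \emph{other} bad vertex) — but a copy through $v$ with $v$ good can still be discarded. The clean fix is to iterate: repeatedly remove copies through a vertex that is bad \emph{in the current subgraph}, i.e. run the deletion to a fixed point. One then bounds the total number of deleted copies by the same degeneracy sum $\sum_{v}\deg^G(v)\cdot\frac{\alpha}{4d} \le \frac{\alpha}{2}|G|$ — charging each deleted copy to the vertex whose bad-ness caused its removal, and noting each vertex $v$ can be charged at most $\frac{\alpha}{4d}\deg^G(v)$ times since once that many copies through $v$ are removed $v$ is no longer bad. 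At the fixed point, no vertex is bad, so $G[\mathcal{H}']$ is $(\alpha/4d)$-degree preserving, and $|G[\mathcal{H}']| \ge |\mathcal{H}'| \ge \alpha|G|/2 \ge (\alpha/4d)|G|$ (using $d\ge 1$, or absorbing the constant), so the $|G'|\ge c|G|$ part of the definition also holds. Finally I would double-check the constant $\alpha/4d$ against the arithmetic: with the factor $4$ there is room to spare, so the bound $\frac{\alpha}{2}|G|$ on deletions and hence $|\mathcal{H}'|\ge\alpha|G|/2$ goes through cleanly.
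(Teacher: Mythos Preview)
Your iterative-cleanup argument is exactly the paper's proof: start with $\mathcal H'=\mathcal H$, and while some vertex $v$ of $G[\mathcal H']$ has $\deg^{G[\mathcal H']}(v)<\frac{\alpha}{4d}\deg^G(v)$, delete all copies through $v$; each such $v$ is processed at most once and contributes at most $\frac{\alpha}{4d}\deg^G(v)$ deletions, so the total number of deleted copies is at most $\frac{\alpha}{4d}\sum_v\deg^G(v)\le \frac{\alpha}{2}|G|$, giving $|\mathcal H'|\ge \alpha|G|/2$. (Your charging sentence ``once that many copies through $v$ are removed $v$ is no longer bad'' is phrased backwards---removing copies lowers $v$'s degree---but the bound you state is the right one: when $v$ is selected its current degree is below $\frac{\alpha}{4d}\deg^G(v)$, so at most that many copies are removed, and afterwards $v$ is gone from $G[\mathcal H']$.)

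The one genuine slip is the inequality $|G[\mathcal H']|\ge |\mathcal H'|$, which is false in general: many edge-disjoint copies can live on few vertices. The paper closes this with one more use of degeneracy: since $G[\mathcal H']$ is a subgraph of the $d$-degenerate graph $G$, it is itself $d$-degenerate, so
\[
|V(G[\mathcal H'])|\ \ge\ \frac{|E(G[\mathcal H'])|}{d}\ \ge\ \frac{|\mathcal H'|\cdot |E(H)|}{d}\ \ge\ \frac{\alpha|G|}{2d}\ >\ \frac{\alpha}{4d}|G|.
\]
With that replacement your proof is complete and coincides with the paper's.
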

This lemma generalizes \cite[Lemma 18]{CzumajS19} to graphs with bounded degeneracy.
The assumption that $E(H)>0$ is not a limitation with regard to our tester,
since $\texttt{Tester}$ treats isolated vertices separately from the rest of $H$.
The full proof is in \cref{sec:appendix-missing-proof}.

\begin{corollary} \label{cor:sec31}
  Let $H$ be a graph containing $n_H$ connected components, and let
  $G$ be a $d$-degenerate graph (for
  some constant $d>0$) that is $\e$-far from being $H$-free.

	There exists a set
	$\mathcal H$ of at least $\e|G|/2|E(H)|$ edge-disjoint
	copies of $H$ in $G$ such that if, for a positive integer $n>0$,
        $\texttt{Tester}(n,H,\e,G[\mathcal H])$ finds a copy of $H$ in
	$G[\mathcal H]$ with probability at least $2/3$, then there exists a
        positive integer $n'=O_{\e,H,n}(1)$ such that $\texttt{Tester}(n',H,\e,G)$
	does so in $G$ with probability at least
	$2/3$ in $G$.
\end{corollary}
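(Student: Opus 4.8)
The plan is to chain the three lemmas of this section together, tracking the size of the copy set and the degree-preservation constant at each step. First I would invoke \cref{lem:e-far-edge-disjoint}: since $G$ is $\e$-far from being $H$-free, there is a set $\mathcal H_0$ of edge-disjoint copies of $H$ in $G$ with $|\mathcal H_0| \ge \e|G|/|E(H)|$. Next, since $G$ is $d$-degenerate, I apply \cref{lem:copy-high-degree} with $\alpha = \e/|E(H)|$ to extract $\mathcal H \subseteq \mathcal H_0$ such that $G[\mathcal H]$ is $c$-degree preserving for $c = \alpha/4d = \e/(4d\,|E(H)|)$ and $|\mathcal H| \ge \alpha|G|/2 = \e|G|/2|E(H)|$. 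This is exactly the set $\mathcal H$ claimed in the statement; note that $c$ is a constant depending only on $\e$, $H$, and $d$.

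Now suppose $\texttt{Tester}(n,H,\e,G[\mathcal H])$ finds a copy of $H$ with probability at least $2/3$. The tester makes, per iteration, one call to $\bfstraverse(\,\cdot\,,|H|,\Delta(H))$ for each of the $n_H$ connected components of $H$, and succeeds if the union of returned subgraphs contains a copy of $H$. I would first argue that a \emph{single} iteration of the inner loop on $G[\mathcal H]$ succeeds with probability at least $p_0 := \Omega_{\e,H,n}(1)$: indeed, running $n$ iterations and taking the union of \emph{all} $n \cdot n_H$ BFS subgraphs succeeds with probability $\ge 2/3$, and this global success event is contained in the event that some single iteration (a fixed block of $n_H$ BFS calls) succeeds, so by a union bound one of the $n$ iterations succeeds with probability $\ge (2/3)/n$; more carefully, using independence across iterations, the probability that a single iteration succeeds is at least $1 - (1/3)^{1/n} \ge \frac{1}{3n}$. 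Either way, set $p_0 = \frac{1}{3n}$.

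Then apply \cref{lem:high-proba}: since $G[\mathcal H]$ is $c$-degree preserving in $G$, each call $\bfstraverse(G[\mathcal H],|H|,\Delta(H))$ returning a given subgraph $F$ has its probability transferred to $\bfstraverse(G,|H|,\Delta(H))$ up to a factor $c^{\Delta(H)^{|H|}+1}$. Running the $n_H$ BFS calls of one iteration on $G$ (which are mutually independent), the probability that they jointly return the same subgraphs as a successful iteration on $G[\mathcal H]$ is at least $\bigl(c^{\Delta(H)^{|H|}+1}\bigr)^{n_H} \cdot p_0 =: p_1$, and any such outcome lets the tester on $G$ find a copy of $H$ (since every subgraph of $G[\mathcal H]$ is a subgraph of $G$). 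Since $c$, $\Delta(H)$, $|H|$, $n_H$, $n$ are all constants depending only on $\e$, $H$, and $d$, the quantity $p_1$ is a positive constant $\Omega_{\e,H,n,d}(1)$. Finally, to boost $p_1$ to $2/3$, take $n' := \lceil \ln 3 / p_1 \rceil = O_{\e,H,n}(1)$ independent iterations: the probability that $\texttt{Tester}(n',H,\e,G)$ fails all $n'$ iterations is at most $(1-p_1)^{n'} \le e^{-p_1 n'} \le 1/3$, so it finds a copy of $H$ in $G$ with probability at least $2/3$.

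The main obstacle is the bookkeeping in the middle step: one must be careful that \cref{lem:high-proba} is stated for a single $\bfstraverse$ call, whereas one iteration of $\texttt{Tester}$ bundles $n_H$ such calls, and that these calls are run independently so that probabilities multiply cleanly. A secondary point of care is making explicit that the ``success'' of the tester on $G[\mathcal H]$ is witnessed by the BFS subgraphs themselves — which are literal subgraphs of $G$ — so that reproducing those subgraphs via the runs on $G$ genuinely yields a copy of $H$ in $G$; this is where edge-disjointness and the copies living inside $G[\mathcal H] \subseteq G$ are used implicitly.
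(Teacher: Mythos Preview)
Your proof is correct and follows essentially the same route as the paper: build $\mathcal H$ via Lemmas~\ref{lem:e-far-edge-disjoint} and~\ref{lem:copy-high-degree}, transfer the tester's success probability from $G[\mathcal H]$ to $G$ via Lemma~\ref{lem:high-proba} and independence of the \bfstraverse calls, then boost by repetition. The only cosmetic difference is that you first extract a single-iteration success probability $p_0\ge 1/(3n)$ on $G[\mathcal H]$ before transferring, whereas the paper transfers the full $n$-iteration probability directly (picking up the factor $c^{\,n\cdot n_H(\Delta(H)^{|H|}+1)}$ rather than your $c^{\,n_H(\Delta(H)^{|H|}+1)}$); both yield $n'=O_{\e,H,n}(1)$.
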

\begin{proof}
  We define $\Hh$ with Lemmas~\ref{lem:e-far-edge-disjoint}
  and~\ref{lem:copy-high-degree}. By Lemma~\ref{lem:high-proba},
  $\texttt{Tester}(n,H,\e,G)$ finds the exact same subgraph of
  $G[\mathcal H]$ as $\texttt{Tester}(n,H,\e,G[\mathcal H])$ with
  probability $(\e/4d|E(H)|)^{nn_H(d^t+1)}$ (calls to \bfstraverse
  being independent). 
  Thus, $\texttt{Tester}(nm,H,\e,G)$ finds a
  copy of~$H$ with probability at least $p=(2/3)\cdot((\e/4d|E(H)|)^{nn_H(d^t+1)})$.
  
  Note that running $m$ times $\texttt{Tester}(n,H,\e,G)$ is the same as running $\texttt{Tester}(nm,H,\e,G)$. The probability that $\texttt{Tester}(n,H,\e,G)$ fails $m$ times in a row is $(1-p)^m$.

  Let $m=2/p$ and $n'=nm$, we conclude that $\texttt{Tester}(n',H,\e,G)$ succeeds with probability at least $1-(1-p)^m = 1-(1-2/m)^m>1-e^{-2}>2/3$.
\end{proof}

\subsection{Reduction to the case of graphs with bounded treedepth}\label{sec:from-BE-to-TD}
We first prove that in a graph $G$ which is $\e$-far from being
$H$-free and of bounded expansion, we can find a set $\mathcal H$ of many
edge-disjoint copies of $H$ such that $G[\mathcal H]$ has bounded treedepth. 
The proof relies on the existence of $p$-treedepth colorings.
We then reduce the problem of testing $H$-freeness on graphs of bounded
expansion to graphs of bounded treedepth, by combining this lemma
with the results from the previous section.
\begin{lemma}\label{lem:sub-graph-bd-tree-depth}
  Let $H$ and $G$ be graphs, with $G$ from a graph class $\mathcal C$
  of bounded expansion. Let $\mathcal H$ be a set of edge-disjoint
  copies of $H$ in $G$.

  There exists $\mathcal H' \subseteq \mathcal H$ such that
  $|\mathcal H'| = \Theta_{|H|,\mathcal C}(|\mathcal H|)$ and
  $G[\mathcal H']$ has treedepth at most $|H|$.
\end{lemma}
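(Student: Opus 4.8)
The plan is to use a $p$-treedepth coloring of $G$ with $p = |H|$. Since $\mathcal C$ has bounded expansion, there is a constant $N = f(|H|)$ such that $G$ admits a vertex coloring with at most $N$ colors where every subset $S$ of at most $|H|$ color classes induces a subgraph of treedepth at most $|S| \le |H|$. Fix such a coloring $\chi \colon V(G) \to [N]$. Each copy $h \in \mathcal H$ uses at most $|V(H)| \le |H|$ vertices, hence its image touches some set of at most $|H|$ colors; call this the \emph{color signature} $\sigma(h) \subseteq [N]$ of $h$, a subset of $[N]$ of size at most $|H|$.

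First I would observe that the number of possible color signatures is bounded: there are at most $\binom{N}{|H|} \le N^{|H|}$ subsets of $[N]$ of size at most $|H|$, a constant depending only on $|H|$ and $\mathcal C$. By the pigeonhole principle, there is one signature $S_0$ shared by at least $|\mathcal H| / N^{|H|}$ of the copies in $\mathcal H$. Let $\mathcal H' = \{ h \in \mathcal H : \sigma(h) = S_0 \}$ (equivalently, all copies contained in the subgraph induced by colors in $S_0$). Then $|\mathcal H'| \ge |\mathcal H|/N^{|H|} = \Theta_{|H|,\mathcal C}(|\mathcal H|)$, and since $\mathcal H' \subseteq \mathcal H$ it remains a set of edge-disjoint copies of $H$. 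The upper bound $|\mathcal H'| \le |\mathcal H|$ is trivial, giving the $\Theta$ statement.

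Finally I would check the treedepth bound. Every vertex of $G[\mathcal H']$ lies in the image of some copy $h \in \mathcal H'$, and since $\sigma(h) = S_0$, that vertex has a color in $S_0$. Hence $G[\mathcal H']$ is a subgraph of the subgraph of $G$ induced by all vertices colored with colors in $S_0$. As $|S_0| \le |H| = p$, the defining property of a $p$-treedepth coloring guarantees this induced subgraph has treedepth at most $|S_0| \le |H|$; treedepth is monotone under taking subgraphs, so $\td(G[\mathcal H']) \le |H|$, as required.

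The only mildly delicate point is bookkeeping: making sure that ``the vertices touched by a copy use at most $|H|$ colors'' is correctly matched with the parameter $p = |H|$ in the definition of a $p$-treedepth coloring, and that monotonicity of treedepth under subgraphs (not just induced subgraphs) is invoked, since $G[\mathcal H']$ need not equal the full subgraph induced by the colors of $S_0$. Neither of these is an obstacle; the argument is essentially a pigeonhole on color signatures.
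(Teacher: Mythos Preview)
Your proof is correct and follows essentially the same route as the paper: take a $|H|$-treedepth coloring, pigeonhole on the set of colors used by each copy, and restrict to the copies sharing a popular color set, which then live in a subgraph of treedepth at most $|H|$. The only cosmetic differences are that the paper colors $G[\mathcal H]$ rather than $G$ and indexes by $|H|$-tuples of colors instead of subsets; your explicit remark on monotonicity of treedepth under subgraphs is a nice touch the paper leaves implicit.
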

\begin{proof}
  Consider a $|H|$-treedepth coloring of $G[\mathcal H]$, using
  colors $c_1,\dots, c_\ell$. By
  \Cref{def:bounded_expansion},~$\ell$ is bounded by a function
  only depending on $\mathcal C$ and $|H|$. Let
  $\{C_1,\ldots,C_{\ell^{|H|}}\}$ be the set of all $|H|$-tuples of colors.
  Each copy of $H$ in $\mathcal H$ contains $|H|$ vertices, and
  therefore is colored with the colors of at least one of the $C_i$.
  Therefore, by the pigeonhole principle, there is a tuple of colors $C_i$ such that at least
  $|\mathcal H|/\ell^{|H|}$ copies of $H$ in $\mathcal H$ are colored
  with the colors of $C_i$. Let $\mathcal H'$ be the set of copies
  that only use the colors of $C_i$. 
  
  Notice that the graph $G[\mathcal H']$ is a
  subgraph of $G[C_i]$, that is, the subgraph induced by the vertices colored with colors from $C_i$ in the $|H|$-treedepth coloring. By definition of the $|H|$-treedepth coloring, $G[C_i]$ (and therefore $G[\mathcal H']$) has treedepth $|H|$. 
  We can then conclude thank to
  $|\mathcal H|/\ell^{|H|}\le |\mathcal H'|\le|\mathcal H|$,
  so $|\mathcal H'|=\Theta_{H,\mathcal C}(|\mathcal H|)$.
\end{proof}

We can now prove the reduction from bounded expansion to bounded treedepth for the success probability of the tester. 
\begin{lemma} \label{lem:reduction_tree-depth} Let $H$ and $G$ be
  graphs, with $G$ $\e$-far from being $H$-free and from a graph class
  $\mathcal C$ of bounded expansion.

  There exists a set $\mathcal H$ of edge-disjoint copies of $H$ in $G$ such
  that $G[\mathcal H]$ has bounded treedepth and, given a positive integer $n>0$, if
  $\texttt{Tester}(n,H,\e,G[\mathcal H])$ finds a copy of $H$ in~$G[\mathcal H]$
  with probability at least $2/3$, then there exists a positive integer
  $n'=O_{\e,H,n}(1)$ such that $\texttt{Tester}(n',H,\e,G)$ finds a copy of $H$ in $G$
  with probability at least $2/3$.
\end{lemma}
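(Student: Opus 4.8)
The plan is to combine the three preceding results in a single pipeline. First I would apply \Cref{lem:e-far-edge-disjoint}: since $G$ is $\e$-far from being $H$-free, it contains a set $\mathcal H_0$ of edge-disjoint copies of $H$ with $|\mathcal H_0|\ge\e|G|/|E(H)|$, so in particular $|\mathcal H_0|=\Omega_{\e,H}(|G|)$. Next, because $G$ belongs to a class $\mathcal C$ of bounded expansion, it has bounded degeneracy (by \cite[Fact 3.2]{NesetrilM08}), say degeneracy $d=d(\mathcal C)$; I would then feed $\mathcal H_0$ through \Cref{lem:sub-graph-bd-tree-depth} to extract $\mathcal H_1\subseteq\mathcal H_0$ with $|\mathcal H_1|=\Theta_{|H|,\mathcal C}(|\mathcal H_0|)=\Omega_{\e,H,\mathcal C}(|G|)$ and such that $G[\mathcal H_1]$ has treedepth at most $|H|$.

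The remaining issue is that \Cref{lem:high-proba} needs a $c$-degree preserving subgraph, and $G[\mathcal H_1]$ need not be one on its own. So I would apply \Cref{lem:copy-high-degree} to $\mathcal H_1$ (valid since $G$ is $d$-degenerate and $|\mathcal H_1|\ge\alpha|G|$ for a constant $\alpha=\alpha(\e,H,\mathcal C)>0$), obtaining $\mathcal H\subseteq\mathcal H_1$ with $|\mathcal H|\ge\alpha|G|/2$ and $G[\mathcal H]$ being $(\alpha/4d)$-degree preserving. Crucially, treedepth is monotone under taking subgraphs, so $G[\mathcal H]\subseteq G[\mathcal H_1]$ still has treedepth at most $|H|$, hence bounded treedepth as required. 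This is essentially the same bookkeeping as in \Cref{cor:sec31}, only with the extra middle step that first shrinks $\mathcal H$ to control treedepth before the degree-preservation step is applied.

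For the probability transfer, set $c=\alpha/4d$, which is a constant depending only on $\e$, $H$, and $\mathcal C$. Writing $n_H$ for the number of connected components of $H$ and $t=|H|$, $\Delta=\Delta(H)$ for the parameters used by $\texttt{Tester}$, each of the $n\cdot n_H$ independent calls to $\bfstraverse(G[\mathcal H],\Delta,t)$ that returns a given subgraph of $G[\mathcal H]$ is, by \Cref{lem:high-proba}, matched by a call $\bfstraverse(G,\Delta,t)$ returning the same subgraph with probability at least $c^{\Delta^{t}+1}$ times as large. Hence if $\texttt{Tester}(n,H,\e,G[\mathcal H])$ finds a copy of $H$ with probability at least $2/3$, then $\texttt{Tester}(n,H,\e,G)$ finds one with probability at least $p:=(2/3)\cdot c^{nn_H(\Delta^{t}+1)}$, a constant depending only on $\e,H,n,\mathcal C$. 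Since running $\texttt{Tester}(n,H,\e,G)$ $m$ times in a row is the same as $\texttt{Tester}(nm,H,\e,G)$ and the runs are independent, choosing $m=\lceil 2/p\rceil$ and $n'=nm=O_{\e,H,n}(1)$ gives success probability at least $1-(1-p)^{m}\ge 1-e^{-2}>2/3$, exactly as in the proof of \Cref{cor:sec31}.

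The main obstacle is simply ordering the reductions correctly: \Cref{lem:copy-high-degree} must be applied \emph{after} \Cref{lem:sub-graph-bd-tree-depth}, and one must invoke monotonicity of treedepth under subgraphs so that the final $\mathcal H$ retains the treedepth bound. Everything else is constant chasing identical to \Cref{cor:sec31}; no genuinely new estimate is needed.
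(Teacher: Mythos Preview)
Your proposal is correct and follows essentially the same route as the paper: apply \Cref{lem:e-far-edge-disjoint}, then \Cref{lem:sub-graph-bd-tree-depth}, then \Cref{lem:copy-high-degree}, invoke monotonicity of treedepth under subgraphs, and finish with \Cref{lem:high-proba} and the amplification argument of \Cref{cor:sec31}. You have in fact spelled out more of the constants and explicitly flagged the key ordering constraint (treedepth reduction before degree preservation) that the paper leaves implicit.
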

\begin{proof}
  We start with \cref{lem:e-far-edge-disjoint} which gives a set $\Hh_1$ of $\Omega_{\e,H}(|G|)$ many edge disjoint copies of $H$. We then apply 
  \cref{lem:sub-graph-bd-tree-depth} yielding a set $\Hh_2$ such that $G[\Hh_2]$ has treedepth at most~$|H|$, while preserving $\Hh_2 = \Omega(|G|)$.
  Finally, we apply \cref{lem:copy-high-degree} to extract a set $\Hh_3$ from $\Hh_2$ that is $c$-degree preserving for some constant $c=O_{H,\e,\Cc}(1)$, and still $\Hh_3=\Omega(|G|)$. 
  
  We conclude similarly as for \Cref{cor:sec31}: thanks to \cref{lem:high-proba}.
\end{proof}

\subsection{Testing \texorpdfstring{$H$}{H}-freeness on graphs of bounded
  expansion}\label{sec:ccl-reduction}
We are ready to prove the main result of this work.
\begin{theorem} \label{lem:testing_H_freeness} For any graph class
  $\mathcal C$ of bounded expansion, any graph $H$, and any proximity
  parameter, the property of being $H$-free can be tested in the
  random neighbor model on graphs of $\mathcal C$ with constant query
  complexity and one-sided error.
\end{theorem}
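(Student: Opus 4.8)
The plan is to use the tester $\texttt{Tester}$ with a suitable constant number of repetitions and to chain together the reductions already established, closing the argument with the theorem of Czumaj and Sohler~\cite{CzumajS19} for proper minor-closed classes, invoked as a black box. One-sidedness is immediate: a copy of $H$ detected by the searches in $G$ is a genuine subgraph copy, so an $H$-free graph is always accepted. Since classes of bounded expansion have bounded degeneracy, fix $d$ with every $G\in\mathcal C$ being $d$-degenerate. It then remains to exhibit a constant $n'=O_{\e,H,\mathcal C}(1)$ such that $\texttt{Tester}(n',H,\e,G)$ rejects with probability at least $2/3$ on every $G\in\mathcal C$ that is $\e$-far from being $H$-free; by \Cref{prop:H-Hh} this simultaneously handles finite families of forbidden subgraphs, and we may assume $H$ has at least one edge, since $\texttt{Tester}$ treats isolated vertices separately and the edgeless case is immediate.

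Given such a $G$, I would feed it through \Cref{lem:reduction_tree-depth}, which produces a set $\mathcal H$ of edge-disjoint copies of $H$ with $\td(G[\mathcal H])\le|H|$ and---reading off the proof, which passes through sets $\Hh_1,\Hh_2,\Hh_3$ all linear in $|G|$---with $|\mathcal H|\ge c\,|G|$ for some constant $c=c(\e,H,\mathcal C)>0$. The same lemma guarantees that if $\texttt{Tester}(n,H,\e,G[\mathcal H])$ finds a copy of $H$ in $G[\mathcal H]$ with probability at least $2/3$, then $\texttt{Tester}(n',H,\e,G)$ does so in $G$ with probability at least $2/3$ for some $n'=O_{\e,H,n}(1)$. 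So the whole statement reduces to producing a constant $n$ that works on $G[\mathcal H]$.

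For that step, first note that $G[\mathcal H]$ is not merely non-$H$-free but $\e'$-far from being $H$-free for the \emph{constant} $\e'=c/2$: the $|\mathcal H|$ copies are edge-disjoint, so any $H$-free subgraph of $G[\mathcal H]$ must lose at least $|\mathcal H|\ge c\,|G|\ge c\,|G[\mathcal H]|$ edges. Second, $\td(G[\mathcal H])\le|H|$, and classes of bounded treedepth are properly minor-closed, so $G[\mathcal H]$ lies in a fixed proper minor-closed class $\mathcal D$ depending only on $H$. Since the tester of~\cite{CzumajS19} coincides with our $\texttt{Tester}$ up to the handling of isolated vertices and disconnected $H$, their theorem provides a constant $n=O_{\e',H}(1)=O_{\e,H,\mathcal C}(1)$ for which $\texttt{Tester}(n,H,\e',G[\mathcal H])$---equivalently $\texttt{Tester}(n,H,\e,G[\mathcal H])$, since the searches never consult the proximity parameter---rejects $G[\mathcal H]$, i.e.\ finds a copy of $H$, with probability at least $2/3$. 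Chaining with the preceding paragraph yields the desired $n'$.

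The bookkeeping of the probability losses is routine and is already carried out in \Cref{cor:sec31} and \Cref{lem:reduction_tree-depth}. The point that needs real care---and which I would flag as the main obstacle---is verifying that $\e'$ is a genuine constant, i.e.\ that $|\mathcal H|$ stays multiplicatively comparable to $|G[\mathcal H]|\le|G|$; this is exactly why every reduction lemma is careful to keep $\mathcal H$ linear in $|G|$, and it is the reason the definition of a $c$-degree preserving subgraph carries the clause $|G'|\ge c|G|$. Finally, to make the argument fully self-contained one replaces the appeal to~\cite{CzumajS19} in the last step by the bounded-treedepth tester developed in \Cref{sec:bounded_treedepth}.
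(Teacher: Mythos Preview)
Your proof is correct and follows essentially the same route as the paper: invoke \Cref{lem:reduction_tree-depth} to pass to $G[\mathcal H]$ of bounded treedepth, observe that bounded-treedepth classes are properly minor-closed, and close with the result of~\cite{CzumajS19}. Your explicit verification that $G[\mathcal H]$ is $\e'$-far from $H$-free for a constant $\e'$ (using $|\mathcal H|=\Omega(|G|)\ge\Omega(|G[\mathcal H]|)$ and edge-disjointness) is a detail the paper leaves implicit but that is indeed needed to apply the black box, and you identify it correctly as the point deserving care.
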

\begin{proof}
  By Lemma~\ref{lem:reduction_tree-depth}, it is sufficient to prove
  the statement when $\mathcal C$ is a class of graphs with bounded
  treedepth. This is a particular case of the main result of
  \cite{CzumajS19}, since bounded treedepth classes are properly minor-closed.
\end{proof}

In this proof, we have relied on \cite{CzumajS19} for graphs of
bounded treedepth. However, we reprove this particular case in the
next section. Hence we present a self-contained proof, that we believe
to be simpler since it avoids the machinery of edge-contractions,
hypergraphs and safety from \cite{CzumajS19}. Our proof also provides
an insight into the structure imposed by $H$-freeness on graphs of
bounded treedepth.

We believe that classes of bounded treedepth offer rich structure for
property testing and thus are a suitable ``simple'' class in the lower
ends of the sparsity hierarchy (see \Cref{pic:graphclasses}).
Most of the existing literature on property testing in the sparse
model focuses on graphs of bounded degree or on planar graphs and does
not consider classes of bounded treedepth.\footnote{A notable exception is a work by Esperet and Norin
  \cite{EsperetN22} considering treedepth to prove 1.~a linear
  Erdos--Posa property for monotone properties on properly
  minor-closed classes and 2.~that proper minor-closed classes admit
  approximate proof labelling schemes of logarithmic complexity.}


\section{Testing \texorpdfstring{$H$}{H}-freeness for graphs with bounded treedepth}
\label{sec:bounded_treedepth}

We start by explaining how we proceed before giving the formal definitions of the terms used in the following introduction.

Consider the following family of graphs (a graph for each $n>0$, similar to \cite[Figure
3]{CzumajMOS19}):
\begin{center}
  \includegraphics[width=0.5\linewidth]{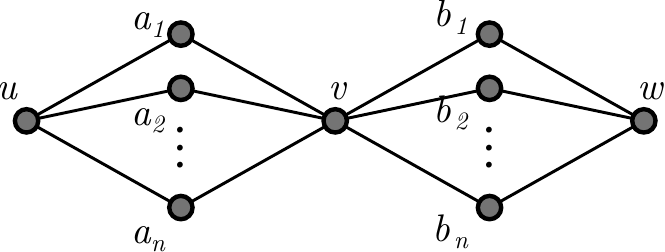}
\end{center}
The maximum length of its paths is five, and its
treedepth at most five as well. Furthermore, it contains $n$ edge-disjoint
copies of $P_5$ of the shape $ua_ivb_iw$. Write $\Hh$ for the
corresponding set of copies of $P_5$. Consider a run of \bfstraverse.
When $n\to \infty$, the probability of finding a copy of $P_5$ which
is in $\mathcal H$ goes to zero. However, with breadth one and depth
four, \bfstraverse finds a copy of $P_5$ of shape $ua_ivb_jw$ with high
probability, independently of the value of $n$.

\bfstraverse also finds other kinds of copies of $P_5$ with high
probabilities: $a_ivb_jwb_k$ for example. However, we focus on the copies of $P_5$
which are related to those in $\Hh$, i.e. that agree on common
vertices (the vertices $u$, $v$, and $w$): each vertex of $G$
corresponds to exactly one vertex of $H$ via copies in $\Hh$. We say that $G$ is
\emph{uniformly colored} by the vertices of $H$.

The particularity of copies of $P_5$ of shape $ua_ivb_jw$ is that they
consist in reconstructed copies of $H$ from \emph{parallel parts} of
$H$ (reduced here to the vertices $a_i$ and $b_j$) attached to a few number of
vertices (the vertices $u$, $v$, and $w$) we call \emph{sources}.
The specific choices of \emph{parallel parts} does not matter as long as
some \emph{compatibility} is ensured on the sources.

In this section, we show that this behavior is not specific to copies
of $P_5$ in the family of graphs introduced above, and we extend
these ideas to graphs of bounded treedepth.

At first, we introduce formally the notions of \emph{uniformly colored}
set of copies of $H$. Furthermore, we relate the copies of $H$ to the
treedepth via \emph{uniformly layered} treedepth embedding. We
show that testing on graphs of bounded treedepth can be reduced to
testing on graphs of treedepth at most $|H|$ with uniformly layered
embeddings.

Then, we formalize the notions of parallel parts, sources,
compatibility of parts, and prove that in a graph of treedepth at most
$|H|$ with a uniformly layered embedding, copies of~$H$ can be
constructed somewhat blindly using parallel parts, by choosing
via which copy of a source we enter the next parallel part.

At last, we show that such constructions of connected components of $H$ can be simulated in
the \bfstraverse assuming a breadth of $\Delta(H)$ and a depth of
$|H|$, and with probability~$\Omega_H(1)$. As a consequence, we
conclude that $H$-freeness can be tested in the random neighbor
model with constant query complexity and one-sided error on
classes of graphs of bounded treedepth in the case $H$ is connected.
Finally, we extend this result to nonnecessarily connected graphs~$H$
by using uniformly layered embedding. Indeed, such embeddings
ensure that with high probability, we find disjoint copies of the
connected components of $H$ during successive runs of \bfstraverse.

\subsection{Reduction to uniformly layered treedepth embeddings}
\begin{definition}\label{def-colored}
  Let $H$ and $G$ be two graphs.

  A set $\Hh$ of copies of $H$ in $G$ is called \emph{uniformly
    colored} (by the vertices of $H$) if for all copies $h, h' \in \Hh$ and vertices
  $a,a' \in V(H)$, having $h(a)=h'(a')$ implies $a=a'$.
\end{definition}

Intuitively, if two copies overlap on some vertex $v$ of $G$, they agree on the vertex $a$ of $H$ that is paired with $v$.
Observe that the next lemma is a reformulation of a part of \cite[Lemma~20]{CzumajS19} and is true on arbitrary graphs.
\begin{lemma}
  \label{lem:many-uniformly-colored-copies}
  Let $\Hh$ be a set of $N$ edge-disjoint copies of a graph $H$
  in a graph $G$.

  There exists a uniformly colored set
  $\Hh'\subseteq\Hh$ of cardinality at least
  $N/|H|^{|H|}$.
\end{lemma}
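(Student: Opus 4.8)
The statement asks: given $N$ edge-disjoint copies of $H$ in $G$, extract a uniformly colored subfamily of size at least $N/|H|^{|H|}$. The plan is to use a straightforward pigeonhole argument on a fixed labeling. First I would fix an arbitrary linear order $v_1, \dots, v_k$ on $V(H)$ where $k = |H|$, and fix an arbitrary linear order $w_1 < w_2 < \cdots$ on $V(G)$ (for instance the order induced by vertex indices). For each copy $h \in \Hh$, the image $h(V(H))$ is a $k$-subset of $V(G)$; listing these $k$ vertices in increasing $G$-order gives a sequence $u_1^h < u_2^h < \cdots < u_k^h$, and since $h$ is injective, there is a unique permutation $\pi_h \in S_k$ with $h(v_{\pi_h(j)}) = u_j^h$ for each $j$. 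In other words, $\pi_h$ records, for each rank $j \in \{1,\dots,k\}$, which vertex of $H$ is mapped to the $j$-th smallest vertex of $G$ used by $h$.

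The key observation is that two copies $h, h'$ with $\pi_h = \pi_{h'}$ automatically satisfy the uniform-coloring condition: if $h(a) = h'(a')$, then this common vertex $w$ of $G$ occupies some rank $j$ among the images of $h$, so $a = v_{\pi_h(j)}$; but $w$ occupies the same rank $j$ among the images of $h'$ as well — wait, that is not quite automatic, since $h$ and $h'$ may use different vertex sets. Let me restate the argument more carefully: I group the copies not merely by the abstract permutation but by the full pattern. Actually the clean way is: define an equivalence relation on $\Hh$ by declaring $h \sim h'$ if for every vertex $w$ of $G$ lying in both $h(V(H))$ and $h'(V(H))$, the preimages agree, i.e. $h^{-1}(w) = h'^{-1}(w)$; but this relation is not transitive in general, so instead I classify each copy by a function into a bounded set and use pigeonhole.

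The honest bounded-set classification is the following. For a copy $h$, associate the function $f_h$ that assigns to the $j$-th smallest element of $h(V(H))$ the vertex $h^{-1}$ of that element — but the issue remains that different copies have different domains. The cleanest fix, and the one I would actually write, is: the number of possible maps $V(H) \to V(H)$ is $|H|^{|H|}$, and I will partition $\Hh$ into at most $|H|^{|H|}$ classes such that any two copies in the same class are ``consistent'' in the needed sense. Concretely, I build the classes greedily: process copies one at a time; maintain a partial coloring $\chi$ of $V(G)$ by $V(H)$; a copy $h$ is \emph{accepted} into the current class if it is consistent with $\chi$ (i.e. for every $w$ in its image already colored, $\chi(w) = h^{-1}(w)$), in which case we extend $\chi$ by its remaining vertices. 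Hmm, but this still only yields a single uniformly colored class, not a $1/|H|^{|H|}$ fraction.

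Let me reconsider and give what I believe is the intended argument. For each copy $h$, and each vertex $a \in V(H)$, the vertex $h(a) \in V(G)$ has a \emph{rank} $\rho_h(a) \in \{1,\dots,k\}$ among the $k$ vertices of $h(V(H))$ (ordered by the fixed $G$-order). This yields, for each $h$, a bijection $\rho_h : V(H) \to \{1,\dots,k\}$, i.e. an element of $S_k$, hence at most $k! \le k^k = |H|^{|H|}$ possibilities. By pigeonhole, some value $\rho$ is attained by a subfamily $\Hh' \subseteq \Hh$ of size at least $N/|H|^{|H|}$. I claim $\Hh'$ is uniformly colored: suppose $h, h' \in \Hh'$ and $h(a) = h'(a') = w$. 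Then $w$ has some rank $j$ in $h(V(H))$, so $\rho_h(a) = j$, hence $\rho(a) = j$. We want $a = a'$; we know $\rho(a') = \rho_{h'}(a') = $ rank of $w$ in $h'(V(H))$, call it $j'$, so $\rho(a') = j'$. If $j = j'$ then $\rho(a) = \rho(a')$ and by injectivity of $\rho$ we get $a = a'$. So it remains to argue $j = j'$ — i.e., that $w$ has the same rank in both images. This is \emph{not} automatic, so the naive permutation pigeonhole is insufficient; the real argument must be more subtle. I expect \textbf{this} — ensuring consistency of ranks across copies, not just within a copy — to be the main obstacle, and the author's proof likely handles it either by a cleverer choice of classifying function (e.g. classifying a copy by the whole restriction pattern and using the edge-disjointness, which bounds overlaps), or by the greedy construction sketched above combined with a counting argument showing the largest consistent class has the claimed size. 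In writing up, I would follow the greedy route: repeatedly pull out a maximal uniformly colored subfamily; argue that a copy not added must conflict with an already-committed color; since there are at most $|H|^{|H|}$ possible ``color patterns'' a copy can realize on its vertex set, a standard argument bounds the number of classes needed, giving one class of size $\ge N/|H|^{|H|}$.
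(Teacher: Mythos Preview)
Your proposal contains a genuine gap: none of the approaches you sketch actually closes. You correctly diagnose why the permutation-pigeonhole idea fails (the rank of a shared vertex $w$ need not coincide in $h(V(H))$ and $h'(V(H))$, so equal $\rho_h=\rho_{h'}$ does not force $a=a'$). But your proposed remedy --- greedily peel off maximal uniformly colored subfamilies and assert that at most $|H|^{|H|}$ rounds suffice because ``there are at most $|H|^{|H|}$ possible color patterns a copy can realize'' --- is not a proof. The ``color pattern'' of a rejected copy is defined relative to a partial coloring $\chi$ that changes from round to round, so you are not partitioning $\Hh$ by a fixed function into a set of size $|H|^{|H|}$; without that, no pigeonhole applies. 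You would need an actual argument bounding the number of rounds, and none is given (nor is one ``standard''). Your parenthetical hope that edge-disjointness ``bounds overlaps'' is also a red herring: the lemma holds without any disjointness assumption.

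The paper's proof supplies exactly the missing idea, and it is short: pick a uniformly random function $f\colon V(G)\to V(H)$ and call a copy $h$ \emph{compatible} if $f(h(v))=v$ for all $v\in V(H)$. Each copy is compatible with probability $|H|^{-|H|}$, so by linearity of expectation some fixed $f$ has at least $N/|H|^{|H|}$ compatible copies; these are automatically uniformly colored since $h(a)=h'(a')$ forces $a=f(h(a))=f(h'(a'))=a'$. The crucial point you never reach is to classify copies by a single \emph{global} coloring of $V(G)$ rather than by any per-copy data; once you do that, the argument is two lines.
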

\begin{proof}
  We want the inverse image of a given vertex of $G$ under all copies
  of $\Hh'$ to be either empty or a singleton. We consider
  random assignements of functions $f\colon V(G)\to V(H)$ and look at
  the probability that the copies of $H$ in $\Hh$ are
  \emph{compatible with $f$}, meaning that for every $v\in V(H)$,
  $f(h(v))=v$.

  Write $\{h_1,\dots,h_N\}$ for $\Hh$ and $X_i$ for the random
  variable which is $1$ if $f$ is compatible with $h_i$ and $0$
  otherwise. Since:
  \begin{itemize}
  \item the probability of $f(h_i(v))=v$ for a given $v$ is
    $1/|H|$,
  \item the probabilities of $f(h_i(v))=v$ are pairwise independent
    for distinct values of $v$, and,
  \item having $X_i=1$ is exactly having $f(h_i(v))=v$ for all
    $v\in V(H)$,
  \end{itemize}
  the expected value of $X_i$ is $1/|H|^{|H|}$.
  The number of copies compatible with $f$ is given by $\sum_iX_i$,
  and its expected value is
  \[
    \mathbb E[X]=\sum_i\mathbb E[X_i]=\sum_i\mathbb
    P[X_i=1]=\frac{N}{|H|^{|H|}}.
  \]
  Therefore, there must exist some $f$ with at least this many uniformly colored copies of~$H$.
\end{proof}

\begin{definition}
  Let $\Hh$ be a set of copies of a graph $H$ in a graph $G$.

  A tree embedding $\le_T$ of $G[\Hh]$ is called
  \emph{uniformly-layered} with regard to $\Hh$ if $\Hh$
  is uniformly colored and for every color (with regard to
  $\Hh$), there is a level of $\le_T$ containing precisely
  the vertices of this color.
\end{definition}

\begin{lemma}\label{lem:many-layered-copies}
  Let $H$ and $G$ be graphs, with $G$ of treedepth $d$.
  Let $\Hh$ be a uniformly-colored set of edge-disjoint copies
  of $H$ in $G$.
  There exists $\Hh' \subseteq \Hh$ such that
  $|\Hh'| = \Theta_{|H|,d}(|\Hh|)$ with a
  uniformly-layered tree embedding~$\le_T$ of
  $G[\Hh']$ with depth $|H|$.
\end{lemma}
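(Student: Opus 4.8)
The plan is to start from a treedepth embedding $\le_T$ of $G[\Hh]$ of depth $d$ (which exists since $G[\Hh]$ has treedepth $d$) and to repeatedly restrict $\Hh$ to a large subset so that all surviving copies of $H$ sit ``the same way'' relative to the levels of $\le_T$. Concretely, for a copy $h\in\Hh$, record for each vertex $a\in V(H)$ the level $\ell_h(a)\in\{0,\dots,d-1\}$ of $h(a)$ in $\le_T$; this is a function $V(H)\to\{0,\dots,d-1\}$, so there are at most $d^{|H|}$ possible such "level profiles". By the pigeonhole principle, some profile $\pi$ is shared by at least $|\Hh|/d^{|H|}$ copies; let $\Hh_1$ be that subset. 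Now every copy in $\Hh_1$ assigns the vertex $a\in V(H)$ to a vertex of $G$ at level exactly $\pi(a)$. Since $\Hh$ is uniformly colored, each vertex of $G[\Hh_1]$ is associated (via copies) to a unique color $a\in V(H)$, and hence lies at level exactly $\pi(a)$: the coloring is already "layered" in the weak sense that each color occupies a single level, but possibly several colors share a level, and the levels used may skip values. We also have $|\Hh_1| = \Theta_{|H|,d}(|\Hh|)$, as required.

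The remaining work is to upgrade $\le_T$ restricted to $G[\Hh_1]$ into a tree embedding of depth exactly $|H|$ in which each color occupies its own level, i.e.\ a genuine uniformly-layered embedding. First observe $G[\Hh_1]$ has at most $|H|$ distinct colors, so at most $|H|$ levels of $\le_T$ are nonempty on $G[\Hh_1]$; deleting the empty levels (i.e.\ relabelling the used levels $0,1,\dots$ consecutively, keeping the same comparabilities) yields a tree embedding of $G[\Hh_1]$ of depth at most $|H|$ in which each nonempty original level becomes a single new level. If two colors $a,b$ still share a level, split that level: pick an arbitrary order on the colors and push $b$'s vertices down to a new level immediately below $a$'s, making every $b$-vertex $\le_T$-greater than its original $\le_T$-predecessors plus every $a$-vertex on the old shared level — this preserves the tree-order property (ancestor sets stay totally ordered) and still makes adjacent vertices comparable, because any edge of $G[\Hh_1]$ already had its endpoints comparable and we only ever increase the order. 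Iterating, we arrive at a tree embedding $\le_T'$ of $G[\Hh_1]$ where each color has its own dedicated level, so $\Hh_1$ together with $\le_T'$ is uniformly-layered; the depth is at most $|H|$ (one level per color, and colors are vertices of $H$), and we may pad with empty bottom levels if we want depth exactly $|H|$. Taking $\Hh' = \Hh_1$ finishes the proof.

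The main obstacle I expect is the bookkeeping in the "split a shared level" step: one has to be careful that moving $b$'s vertices to a fresh level strictly below $a$'s does not destroy comparability of some edge $uv$ of $G[\Hh_1]$ whose endpoints lay on the old shared level (there can be no such edge if $\Hh$ is edge-disjoint and colored — but in fact two colors on the same level could be adjacent inside a single copy $h$, so one must argue comparability is restored by the new strict order), and that the partial order remains a tree order (each element's down-set stays a chain). A clean way to handle this uniformly is: choose a linear extension of "color $a$ precedes color $b$" and define $\le_T'$ by: $x \le_T' y$ iff $x\le_T y$, or $\mathrm{color}(x)$ strictly precedes $\mathrm{color}(y)$ in the chosen order, or ($\mathrm{color}(x)=\mathrm{color}(y)$ and $x=y$) — then verify directly that this is a tree embedding with one level per color. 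This collapses all the iteration into a single definition and makes the verification routine.
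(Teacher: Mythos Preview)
Your main step---pigeonholing the copies according to their level profile $\pi\colon V(H)\to\{0,\dots,d-1\}$---is exactly what the paper does, and it correctly yields $\Hh_1\subseteq\Hh$ with $|\Hh_1|\ge|\Hh|/d^{|H|}$ in which every vertex of colour $a$ sits at level $\pi(a)$ of the original embedding $\le_T$. The paper in fact stops here and relegates the ``two colours on one level'' issue to a one-line footnote about inserting intermediate sub-levels.

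Where your proposal goes wrong is precisely in the attempt to make that footnote rigorous. Both your iterative description (``making every $b$-vertex $\le_T$-greater than its original $\le_T$-predecessors \emph{plus every $a$-vertex on the old shared level}'') and your ``clean'' definition (``$x\le_{T'} y$ iff $x\le_T y$, \emph{or} $\mathrm{color}(x)$ strictly precedes $\mathrm{color}(y)$, or \dots'') place a $b$-coloured vertex $v$ strictly above \emph{all} $a$-coloured vertices at once. Distinct $a$-coloured vertices lie on the same level of $\le_T$ and are therefore pairwise incomparable; dumping all of them into the down-set of $v$ destroys the defining property of a tree order that every down-set be a chain. Concretely, for two distinct $a$-vertices $x_1,x_2$ and any $b$-vertex $y$, your $\le_{T'}$ gives $x_1\le_{T'}y$ and $x_2\le_{T'}y$ while $x_1,x_2$ remain $\le_{T'}$-incomparable. (Your claim that ``ancestor sets stay totally ordered'' is thus false, and the final ``collapsed'' relation is not even antisymmetric once two colours swap roles.)

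A correct split must attach to each $b$-vertex at most \emph{one} new $a$-ancestor, chosen consistently with the existing comparabilities above level~$\ell$; the paper's footnote glosses over exactly this point. One workable route is to note that colours sharing a level are necessarily non-adjacent in $H$ (adjacent vertices of a copy are $\le_T$-comparable, hence on different levels), so no edge of $G[\Hh_1]$ runs between an $a$-vertex and a $b$-vertex; you may then move every $b$-vertex to be a \emph{child of its own former self's position} by subdividing only along existing root-to-leaf chains rather than adding cross-comparabilities---but this still requires arguing that same-coloured vertices end up at the same restricted level, which is not automatic and needs care.
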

\begin{proof}
  Write $k$ for $|H|$ and
  $a_1,\dots,a_k$ for the vertices of $H$. Let $G$ be a graph of
  treedepth~$d$, and $\Hh$ be a set of edge-disjoint and
  uniformly colored (by the vertices $a_1,\dots a_k$) copies of~$H$ in
  $G$. Let~$\le_T$ be a tree embedding of $G$ with depth~$d$. 
  Let $C_1,\ldots,C_{d^k}$ be the set of all $k$-tuples of levels. For each
  copy $h$ of $H$ in $\Hh$, and for each $i\le k$, there is a
  level $\ell_i$ such that~$h(a_i)$ lies in level $\ell_i$ in $\le_T$: to $h$
  we associate the tuple $C=(\ell_1,\ldots,\ell_k)$.

  Since there are $d^k$ distinct tuples, there exists a $k$-tuple $C$
  associated to at least $|\Hh|/d^k$ copies. Let $\Hh'$ be the
  subset of $\Hh$ of copies associated to $C$. By construction,
  we have:\linebreak 
  1. $|\Hh'|\ge |\Hh|/d^k$, and \\
  2. $\Hh'$ is uniformly layered\footnote{An astute reader can notice that if $\ell_i=\ell_j$ for some pair $i,j$ then one level can contain two or more colors. This is easily fixable by creating intermediate levels (e.g.~3.1, 3.2, $\ldots$) so that vertices of several colors in one level can choose arbitrarily their sub-level. E.g.~if vertices blue and red are on level 3, we arbitrary put the blue vertices on level 3.1, and the red one in 3.2. This keeps at most $H$ many (sub-)levels.} 
  in $\le_T^{\Hh'}$ (that is $\le_T$ restricted to the vertices of $G[\Hh']$).
\end{proof}

By combining this result with the lemmas from
Section~\ref{sec:reduction_subgraph},
testing $H$-freeness on graphs
of treedepth at most $d$ is reduced to testing $H$-freeness on graphs
induced by many edge-disjoint copies of $H$, of treedepth at most
$|H|$, with a uniformly layered treedepth embedding of depth
$|H|$. We are more precise on the combined use of these lemmas in \cref{sec:combining-bd-td}.

\subsection{Constructing copies of \texorpdfstring{$H$}{H} via parallel parts}\label{sec:algo-on-TD}
\begin{definition}\label{def:source}
  Let $H$ be a graph with a total ordering $\pi$ on its vertices.

  A vertex $v$ of $H$ is called a \emph{source} if it is not adjacent
  to any of its predecessors in $\pi$. Otherwise, it is referred to as
  an \emph{inner vertex}.
\end{definition}

Observe, in particular, that the minimal vertex in $\pi$ is a source.

\begin{definition}
  Let $H$ be a graph with a total ordering $\pi$ on its vertices.
  
  A \emph{parallel part} of $H$ is a maximal connected subgraph of $H$ that contains
  no source.
  For convenience, the terms parallel part and part are used interchangeably.
\end{definition}

\begin{observation}
  Any graph $H$ with a total vertex ordering $\pi$ has a unique vertex
  partition into sources and parallel parts. Moreover, the set of sources of $H$ is an independent set in $H$.
\end{observation}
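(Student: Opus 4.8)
The plan is to identify the parallel parts of $H$ with the connected components of the subgraph induced by the non-source vertices; both assertions then follow almost immediately.

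Write $S$ for the set of sources of $H$ with respect to $\pi$ and $I=V(H)\setminus S$ for the set of inner vertices; note that $S$ and $I$ are determined by $H$ and $\pi$ alone. First I would show that the parallel parts of $H$ are exactly the connected components of $H[I]$, viewed as induced subgraphs of $H$. For one inclusion: each connected component of $H[I]$ is a connected subgraph of $H$ containing no source, and it is maximal as such, since any connected source-free subgraph strictly containing it would have to use a vertex $w\in I$ adjacent to the component but lying outside it, which is impossible for a full connected component of $H[I]$. For the other inclusion: let $Q$ be a parallel part. Adding to $Q$ every edge of $H$ between two vertices of $V(Q)$ keeps it connected and source-free, so maximality forces $Q$ to equal the induced subgraph $H[V(Q)]$; and if $V(Q)$ were not a whole connected component of $H[I]$, some edge of $H[I]$ would leave $V(Q)$, again contradicting maximality. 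Hence the parallel parts are precisely the components of $H[I]$, and therefore $V(H)$ is partitioned into $S$ together with the vertex sets of the parallel parts. Uniqueness of this partition is immediate, as $S$, $I$, and the connected components of $H[I]$ are all uniquely determined by $H$ and $\pi$.

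For the second claim, suppose towards a contradiction that $u,v\in S$ are adjacent in $H$. Since $\pi$ totally orders $V(H)$, one of $u,v$ precedes the other, say $u$ precedes $v$; then $u$ is a predecessor of $v$ in $\pi$ that is adjacent to $v$, so $v$ is not a source, a contradiction. Therefore $S$ is an independent set in $H$.

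This is an elementary statement and I do not anticipate any genuine obstacle. The only point calling for a little care is making explicit that a ``maximal connected subgraph of $H$ containing no source'' must in fact be an induced subgraph whose vertex set is a connected component of $H[I]$, so that the partition into sources and parallel parts is genuinely forced and hence unique.
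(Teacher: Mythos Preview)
Your argument is correct: identifying the parallel parts with the connected components of $H[I]$ is exactly the right move, and both claims follow cleanly from it. The paper itself gives no proof of this observation---it is stated and immediately used---so your write-up simply fills in the routine details the authors left implicit.
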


Let $H$ and $G$ be graphs, with $G$ induced by the edges of a set
$\Hh$ of edge-disjoint copies of $H$, and with a uniformly layered
treedepth embedding of $G$ of depth $|H|$.

We now consider the vertex ordering on $H$ induced by the levels of the
uniformly layered treedepth embedding of $G$.
Observe that if $h\in\Hh$ and $v$ is in a parallel part of $H$, then for every color, $h(v)$ has at most one neighbor of this specific color.
Therefore, as the copies in~$\Hh$ are edge-disjoint, we get the following observation.
\begin{observation}
  Let $P$ be a parallel part of $H$ and $h$ a copy of $H$ in
  $\Hh$.

  Every vertex in $h(P)$ is contained in exactly one copy of $H$ in
  $\Hh$. In particular, if a vertex $v$ of $G[\Hh]$ is
  contained in multiple copies from $\Hh$, then $v$ is a (copy
  of a) source.
\end{observation}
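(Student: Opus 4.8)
The plan is to fix a vertex $v$ of $h(P)$, write $v=h(w)$ with $w\in V(P)$, and show that $h$ is the \emph{only} copy in $\Hh$ whose image contains $v$. Since $\Hh$ is uniformly colored, the vertex $v$ has a well-defined color, namely $w$; hence any copy $h'\in\Hh$ containing $v$ must satisfy $h'(w)=v$, and the statement reduces to proving $h'=h$ for every such $h'$.

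The crucial point is that it is enough to exhibit one edge of $G[\Hh]$ that \emph{both} $h$ and $h'$ are forced to use: since the copies in $\Hh$ are pairwise edge-disjoint, two of them sharing an edge must be equal. To produce such an edge I would use that $w$, lying in a parallel part, is an inner vertex, so it has a neighbor $w'$ in $H$ with $w'<_\pi w$. In $G$ the vertex $h(w')$ is adjacent to $h(w)=v$, hence $\le_T$-comparable to $v$; by the uniformly-layered property $h(w')$ lies in the (unique) level attached to the color $w'$, which sits strictly below the level of $v$ because $w'<_\pi w$. Therefore $h(w')<_T v$, and since the vertices $\le_T$-below $v$ form a chain meeting each level below that of $v$ exactly once, $h(w')$ is \emph{determined by $v$ alone}: it is the unique vertex of $G$ at the level of the color $w'$ lying $\le_T$-below $v$. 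The very same argument applied to $h'$ gives $h'(w')=h(w')$, so $h$ and $h'$ both use the edge $\{v,h(w')\}$ (the image of the $H$-edge $w'w$), whence $h'=h$. The ``in particular'' clause then follows since the sources and the parallel parts partition $V(H)$: a vertex of $G[\Hh]$ lying in two distinct copies cannot be the image of a vertex of any parallel part, so its color is a source.

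I expect the main obstacle to be choosing this argument over a more naive ``propagate the copy everywhere'' approach: trying to reconstruct all of $h'$ from $v$ by walking along the edges of $H$ gets stuck at sources, where the image of a $\pi$-larger neighbor is genuinely not determined (in the $P_5$ example of this section, the central source has many candidate images for its neighbors across distinct copies of $\Hh$); the trick is that a \emph{single} forced edge already suffices once edge-disjointness is invoked. The only other thing to be careful about is the level bookkeeping: that the uniformly-layered property makes the color of a vertex determine its level, that $<_\pi$ is precisely the induced level order, and that a chain below a vertex $v$ meets each level strictly below that of $v$ exactly once.
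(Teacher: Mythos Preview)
Your argument is correct and is exactly the approach the paper intends: the one-sentence justification preceding the observation (``for every color, $h(v)$ has at most one neighbor of this specific color'') is precisely your point that the lower-level neighbor $h(w')$ is forced by the tree structure, after which edge-disjointness gives $h'=h$. Your write-up is in fact more careful than the paper's, since ``for every color'' is only immediately clear for colors at levels \emph{below} that of $v$ --- which, as you observe, is all that is needed.
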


\begin{definition}
  Let $F_1$ and $F_2$ be two parallel parts of $H$ and $h_1$ and $h_2$
  be two copies of $H$ in $G$. We say that $h_1(F_1)$ and $h_2(F_2)$ are
  \emph{compatible} if for every source $s$ adjacent to both
  $F_1$ and $F_2$, we have $h_1(s)=h_2(s)$.
\end{definition}

We now show how copies of $H$ may be constructed from compatible
parallel parts and their adjacent sources:
\begin{observation}\label{obs:compatible_parts}
  Let $F_1,\dots,F_n$ be the parts of $H$ and $S_1,\dots,S_n$ the sets
  of sources they are adjacent to, respectively. Given
  $h_1,\dots,h_n\in\Hh$ copies of $H$ such that the graphs~$h_i(F_i)$
  are pairwise compatible, the graph $\cup_ih_i(F_i\cup S_i)$ is a
  copy of $H$ in $G$ (which may not be in~$\Hh$).
\end{observation}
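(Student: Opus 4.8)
The plan is to glue the pieces $h_i(F_i\cup S_i)$ along their shared source vertices into the image of a single map $h\colon V(H)\to V(G)$, and then to check that $h$ is a copy of $H$ in the sense of the definition.

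First I would record the edge structure of $H$ with respect to its partition into sources and parts. Because the sources form an independent set, no edge of $H$ joins two sources. Because each part $F_i$ is a \emph{maximal} connected subgraph of $H$ avoiding sources, no edge of $H$ joins two distinct parts (such an edge would merge them into one part). Hence every edge of $H$ either lies inside a single part $F_i$ or joins a vertex of $F_i$ to a source $s$ adjacent to $F_i$, i.e.\ $s\in S_i$; in both cases both endpoints lie in $V(F_i)\cup S_i$. Moreover every source is adjacent to some part (we may assume $H$ has no isolated vertex, as the tester treats those separately), so the sets $V(F_i)\cup S_i$ cover $V(H)$ while the $V(F_i)$ partition the non-source vertices.

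Next I would define $h$. For a non-source vertex $v$, let $h(v):=h_i(v)$ where $F_i$ is the unique part containing $v$. For a source $s$, pick any index $i$ with $s\in S_i$ and set $h(s):=h_i(s)$. This last choice is well defined precisely by compatibility: if $s\in S_i\cap S_j$ then $s$ is a source adjacent to both $F_i$ and $F_j$, so the compatibility of $h_i(F_i)$ and $h_j(F_j)$ forces $h_i(s)=h_j(s)$. In particular $h$ agrees with $h_i$ on all of $V(F_i)\cup S_i$, for every $i$.

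Finally I would verify the required properties. Injectivity is immediate because $\Hh$ is uniformly colored (this is part of the uniformly layered embedding): if $h(u)=h(v)$, writing this common value as $h_i(u)$ and as $h_j(v)$ via the construction, \Cref{def-colored} gives $u=v$. For edge preservation, take $(u,v)\in E(H)$; by the first step both endpoints lie in $V(F_i)\cup S_i$ for some $i$, and there $h$ coincides with the copy $h_i$, so $(h(u),h(v))=(h_i(u),h_i(v))\in E(G)$. Thus $h$ is a copy of $H$ in $G$. Running the agreement ``$h=h_i$ on $V(F_i)\cup S_i$'' in both directions, together with the fact that the $V(F_i)\cup S_i$ cover $V(H)$ and the edge structure above, shows that the vertex set and edge set of the image subgraph $h(H)$ coincide with those of $\bigcup_i h_i(F_i\cup S_i)$; hence this union is a copy of $H$ in $G$, as claimed. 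The only delicate point is the consistency of the definition of $h$ on sources shared by several parts, and that is exactly what the compatibility hypothesis is designed to guarantee; everything else is bookkeeping.
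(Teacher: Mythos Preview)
Your argument is correct and is precisely the natural verification: partition the edges of $H$ according to the unique part they touch, glue the maps $h_i$ along shared sources using compatibility, and invoke the uniform coloring of $\Hh$ for injectivity. The paper itself gives no proof for this observation (it is stated as self-evident), so there is nothing to compare against; your write-up simply spells out the bookkeeping the authors leave implicit.
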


We now introduce the notion of port. Informaly, a port is a copy of a
source from which a copy of a yet to be discovered parallel part of
$H$ may be found.
\begin{definition}
  Let $\Ff=\{F_i\mid i\in I\}$ be a set of distinct parts of $H$.
  A source $s$ adjacent to $\cup_iF_i$ is called a \emph{port} of
  $\Ff$ if there exists a part $F\notin\Ff$ of $H$ which is adjacent
  to $s$.
\end{definition}

We show that, assuming $H$ is connected and starting with the copy of
a parallel part of~$H$, we can always reconstruct a copy of $H$
``blindly'', meaning that each time we consider the copy of a new part
of $H$, we only choose the port from which we discover it.

Consider the following procedure:

\begin{algorithm}[H]
\caption{Constructing a copy of a connected component of $H$}
\label{alg:constructing_H}
\KwIn{A graph $G[\Hh]$ induced by a set $\Hh$ of edge-disjoint copies.

A uniformly layered tree embedding $\le_T$ of depth $|H|$ of $G$.

A copy $h(F)$ of a parallel part of $H$ contained in the component $C_F$ of $H$.}
\KwOut{A set $\Ff$ of copies of parts of $H$ (including $h(F)$) forming a copy of $C_F$}

$\Ff \gets \{h(F)\}$\;

\While{there exists a parallel part $F'$ of $C_F$ without a copy in $\Ff$}{
    $h'(F')\gets$ a copy of a parallel part of $C_F$ without a copy in $\Ff$, and adjacent to a port of $\Ff$ of maximal level in $\le_T$\;
    $\Ff \gets \Ff ~\cup~ \{h'(F')\}$\;
}

\Return $\Ff$\;
\end{algorithm}

\pagebreak
\begin{lemma}
  \label{lem:main_algo}
  Algorithm~\ref{alg:constructing_H} is sound and terminates.
\end{lemma}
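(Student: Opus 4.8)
The plan is to prove soundness and termination of Algorithm~\ref{alg:constructing_H} separately, with termination being the easy part and soundness—specifically, showing that the output $\Ff$ actually glues together into a copy of $C_F$—being the main work.

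\textbf{Termination.} First I would observe that each iteration of the while-loop adds to $\Ff$ a copy $h'(F')$ of a part $F'$ of $C_F$ that previously had no copy in $\Ff$. Since $C_F$ has only finitely many parallel parts (the unique partition of $V(H)$ into sources and parts from the \emph{Observation} after \Cref{def:source} is finite), and the loop condition fails once every part of $C_F$ has a copy in $\Ff$, the loop runs at most as many times as $C_F$ has parts. The only thing to check is that, whenever the loop condition is satisfied, the chosen $h'(F')$ in line~3 genuinely exists—i.e., that among the not-yet-covered parts of $C_F$, at least one is adjacent to a port of $\Ff$. This is where connectivity of $C_F$ enters: since $C_F$ is connected and $\Ff$ covers a nonempty proper subset of its parts (and their adjacent sources), there must be an edge of $C_F$ leaving the covered region; by the structure of $H$ (sources form an independent set, so every edge has at least one endpoint in a part), this edge connects a covered source to an uncovered part, exhibiting a port adjacent to an uncovered part. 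Hence line~3 is always well-defined and the algorithm terminates.

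\textbf{Soundness.} The claim is that $\cup_{h'(F')\in\Ff} h'(F'\cup S_{F'})$ is a copy of $C_F$ in $G$, where $S_{F'}$ is the set of sources adjacent to $F'$. By \Cref{obs:compatible_parts}, it suffices to show that the copies of parts collected in $\Ff$ are pairwise compatible, i.e., whenever two collected copies $h_1(F_1)$ and $h_2(F_2)$ share an adjacent source $s$, they map $s$ to the same vertex of $G$. I would prove this by induction on the order in which parts are added to $\Ff$, crucially using the maximal-level choice of port in line~3 together with the uniformly layered embedding. The key structural fact is: in a uniformly layered treedepth embedding of depth $|H|$, when we discover a new part via a port $p$ of maximal level, the vertices of the new copy $h'(F')$ lie on levels strictly below $p$ (since $F'$ is connected, contains no source, and every vertex of $F'$ is comparable in $\le_T$ to the source $s$ with $h'(s)=p$—this forces them below $p$, as the only vertices of $C_F$ at or above $p$'s level that could be shared are sources, and $p$ was chosen of maximal level among ports). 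Consequently, any source $s'\neq s$ adjacent to $F'$ that is \emph{also} adjacent to an already-covered part must itself already be covered, and the copy of $s'$ was already fixed; since $h'$ is a genuine copy of $H$ and must respect the (uniformly colored) identification of $G[\Hh]$'s vertices, $h'(s')$ is forced to equal the previously assigned value. This gives compatibility with all earlier parts, closing the induction.

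\textbf{Main obstacle.} The delicate point is the level argument justifying that a newly discovered part sits strictly below its entry port, and that this forces every \emph{other} shared source of the new part to have already been processed. This requires carefully combining three ingredients: (i) $\le_T$ is a treedepth embedding, so adjacent vertices are $\le_T$-comparable and the vertices $\le_T$-below any vertex form a chain; (ii) the embedding is uniformly layered, so each color/vertex-of-$H$ occupies its own level and the vertex ordering $\pi$ on $V(H)$ is exactly the level order; (iii) with respect to $\pi$, sources are precisely the vertices not adjacent to a $\pi$-predecessor, so within a part $F'$ all non-source vertices lie on levels below the highest source adjacent to $F'$, and the maximal-level port is that highest source. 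I expect formalizing the interaction of (i)–(iii)—in particular ruling out that the new copy reuses a vertex on a level at or above the port in a way that contradicts edge-disjointness or uniform coloring—to be the bulk of the argument; everything else is bookkeeping on the finite part-decomposition of $C_F$.
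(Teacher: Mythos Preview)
Your termination argument is fine and in fact slightly more careful than the paper's, which does not explicitly verify that line~3 is always well-defined.

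Your soundness argument, however, has a genuine gap. First, the structural claim that the vertices of $h'(F')$ lie on levels strictly on one side of the entry port $p$ is false. Take $F'=\{u_1,u_2\}$ with edge $u_1u_2$, levels $\ell(u_1)=3$, $\ell(u_2)=1$, a source $s$ at level $2$ adjacent only to $u_1$, and a source $s''$ at level $0$ adjacent only to $u_2$ (add a further part adjacent to $s$ so that $s$ can be a port). Both $u_1,u_2$ are inner, $F'$ is a single parallel part, and if the algorithm enters $F'$ through $p=h'(s)$ at level~$2$, then $h'(u_1)$ sits at level~$3$ while $h'(u_2)$ sits at level~$1$: the part straddles the port. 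Your claim in the final paragraph that ``within a part $F'$ all non-source vertices lie on levels below the highest source adjacent to $F'$'' fails for the same example. Second, and more seriously, the step ``since $h'$ \ldots must respect the (uniformly colored) identification \ldots, $h'(s')$ is forced to equal the previously assigned value'' does not follow. Uniform coloring (\Cref{def-colored}) only guarantees that if $h'(a)=h''(a')$ then $a=a'$; it never forces $h'(s')=h''(s')$. That equality is precisely the compatibility assertion you are trying to prove, so invoking it here is circular.

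The paper establishes compatibility by a different device. Assuming for contradiction that the newly added $h'(F')$ is incompatible with some earlier $h''(F'')$ at a source $s'$ (so $h'(s')\neq h''(s')$), it traces a path in the already-assembled subgraph from $h'(s')$ to $h''(s')$, crossing between parts only through their discovery ports. It then takes the source copy $s_m$ of \emph{minimum} level on that path; since every other port on the path sits at a strictly greater level than $s_m$, the maximal-level-port rule of line~3 forces a discovery order that contradicts $h'(F')$ being the last part added. This minimum-level-source-on-a-path contradiction is the idea your sketch is missing.
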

\begin{proof}
  Termination is ensured by the fact that $H$, and by extension $C_F$,
  contain finitely many parallel parts. At the end of a run, $\Ff$
  contains precisely one copy of each part of $C_F$, including $h(F)$.

  We prove soundness via the following invariant: at any step during a
  run of Algorithm~\ref{alg:constructing_H}, the set $\Ff$
  contains pairwise compatible copies of parts of $C_F$.
  We then conclude with \Cref{obs:compatible_parts}.

  We proceed inductively over the while loop. Let $\Ff$ be
  given after a partial execution of the algorithm. Since the
  algorithm is still running, there exists a part of $C_F$ without a
  copy in $\Ff$. Let $h'(F')\in\Hh$ be a copy of a
  parallel part $F'$ of $C_F$ without a copy in $\Ff$, and such that $F'$ is adjacent to
  a port $s$ of $\Ff$ of maximal level in $\le_T$.

  Assume for the sake of
  contradiction that $h'(F')$ is not compatible with
  some copy $h''(F'')\in \Ff$: let $s'$ be a source
  of $C_F$ adjacent with both $F'$ and $F''$ but with
  $h'(s')\neq h''(s')$.

  By construction of the algorithm, the graph $G'$ induced by the copies of parts
  in $\Ff\cup\{h'(F')\}$ as well as by the corresponding copies of
  adjacent sources is connected.

  Consider, in $G'$, a simple path $P$
  from $h'(s')$ to $h''(s')$ going through copies of parts of~$H$ via the
  ports that led to their addition to $\Ff$. Since initially $\Ff=\{h(F)\}$, such a path is well-defined.

  Let $s_m$ be the copy of a
  source of smallest level in $\le_T$ appearing on $P$. At $s_m$,
  $P$ must change of parallel part. It quits a copy of a part $F_1$ to
  enter a copy of a part $F_2$.

  We perform case analysis depending on whether $F_1$ is discovered
  before $F_2$ by the algorithm or not.

  Assume that $F_2$ is discovered after $F_1$ by the algorithm.
  Consider the sequence of ports that led to discoveries of parts traversed by
  $P$: $h'(s')=s_1,\dots,s_i,s_m,s_1',\dots,s_j'=h''(s'')$.

  By construction of $P$, all sources among $s_1,\dots,s_i$ are at a deeper level in
  $\le_T$ than $s_m$. Hence, all copies of parts of $H$ that the path 
  $h'(s')=s_1,\dots,s_i,s_m$
  traverses, as well as~$h'(F')$, should have been discovered before
  the copy of
  $F_2$, a contradiction since $h'(F')$ is discovered
  after the copy of $F_2$.

  Assume that $F_1$ is discovered after $F_2$ by the algorithm.
  Similarly, as in the previous case, all copies of parts of $H$ that the path
  $s_m,s_1',\dots,s_j'=h''(s'')$
  traverses, as well as a copy of~$F'$ adjacent to $h''(s')$
  should have been discovered before a copy of $F_1$,
  a contradiction since no copy of $F'$ adjacent to $h''(s')$ has been discovered.
\end{proof}

\subsection{Testing \texorpdfstring{$H$}{H}-freeness on graphs of bounded treedepth}\label{sec:combining-bd-td}
To prove testability of $H$-freeness on graphs of bounded treedepth we
proceed as follows. First, we prove that
Algorithm~\ref{alg:constructing_H} can be simulated inside a run of
\bfstraverse. Hence, the latter (with the right parameters) finds a
copy of a given connected component of $H$ with high probability.
Then, by composing such calls sequentially as is done in
\texttt{Tester}, we find $H$ as a whole with high probability. Such a
sequential composition is possible thanks to uniformly colored sets of
copies. Indeed, the copies of connected components of $H$ found by
\bfstraverse may use parallel parts of different copies, but are
always compatible with the uniform coloring of copies of $H$, ensuring
that found copies are disjoint. Finally, to ensure that we find $H$
with probability $2/3$ and not $\Omega_{|H|,\e}(1)$, we use the
parameter $n$ of \texttt{Tester} to strengthen probabilities.

Note that, the intermediate lemmas do not require $G$ to have bounded treedepth, but~$G[\Hh]$ to have a uniformly layered treedepth embedding. Thanks to \cref{lem:many-layered-copies} the former implies the latter.

\begin{lemma}\label{lem:proba_neighborhood}
  Let $H$ and $G$ be graphs, $\Hh$ a set of edge-disjoint copies of a graph $H$, and~$\le_T$
  be a uniformly layered treedepth embedding of $G[\Hh]$ of depth $|H|$.

  Consider a run of \bfstraverse (on $G[\Hh]$) with breadth $\Delta(H)$ and depth
  $t$. If, at step $n<t$ the algorithm has found the copy $h(u)$ of an
  inner vertex $u\in V(H)$ ($h\in\Hh$), then with probability
  $\Omega_{H}(1)$, the copy $h(N_H(u))$ of the neighborhood of $u$ is
  discovered at step~$n+1$.
\end{lemma}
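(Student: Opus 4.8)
The plan is to first pin down exactly what the neighbors of $h(u)$ in $G[\Hh]$ are, and then to turn the $\Delta(H)$ random neighbor queries that \bfstraverse makes on $h(u)$ at step $n+1$ into a coupon-collector estimate.

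First I would record the structural fact that, because $u$ is an inner vertex for the ordering $\pi$ of $V(H)$ induced by the levels of $\le_T$, it lies in some parallel part $P$ of $H$, so by the Observation in \cref{sec:algo-on-TD} the vertex $h(u)\in h(P)$ is contained in exactly one copy of $\Hh$, namely $h$ itself. From this I would deduce that every edge of $G[\Hh]$ incident to $h(u)$ is of the form $h(u)h(b)$ with $b\in N_H(u)$: any such edge is $h'(a)h'(b)$ with $h'\in\Hh$, $(a,b)\in E(H)$, $h'(a)=h(u)$, so $h'=h$ since $h(u)$ lies in no other copy, and $a=u$ since $\Hh$ is uniformly colored and $h$ is injective; conversely each $b\in N_H(u)$ gives such an edge, and distinct $b$ give distinct edges. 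Hence the neighborhood of $h(u)$ in $G[\Hh]$ is exactly $h(N_H(u))$, so $\deg^{G[\Hh]}(h(u))=\deg_H(u)=:k$ with $1\le k\le\Delta(H)$.

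Then, since $h(u)$ was discovered at step $n<t$, it is in \texttt{current\_vertices} at step $n+1$, where \bfstraverse makes $\Delta(H)$ independent random neighbor queries on it, each uniform over the $k$-element set $h(N_H(u))$; every returned vertex $h(w)$ has its edge to $h(u)$ recorded, so $h(N_H(u))$ is fully discovered at step $n+1$ as soon as all $k$ of these vertices are hit by the $\Delta(H)\ge k$ queries. I would then bound the probability of this event from below by the probability that the first $k$ queries already return $k$ pairwise distinct vertices, namely $\tfrac{k!}{k^{k}}\ge k^{-k}\ge |H|^{-|H|}$, a constant depending only on $H$; hence the event has probability $\Omega_H(1)$.

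The only step with real content is the claim that $h(u)$ has no neighbors in $G[\Hh]$ beyond $h(N_H(u))$: a stray neighbor coming from another copy in $\Hh$ would both push $\deg^{G[\Hh]}(h(u))$ above $\Delta(H)$ and change what gets discovered, so the argument genuinely needs $u$ to be an inner vertex, and it would fail for a source. Everything afterwards — identifying which queries suffice and the coupon-collector bound — is routine, and the resulting constant is tiny but, crucially, independent of $G$ and of $\Hh$.
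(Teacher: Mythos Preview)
Your proof is correct and follows essentially the same approach as the paper: first argue that the neighborhood of $h(u)$ in $G[\Hh]$ is exactly $h(N_H(u))$ (the paper states this in one line, you spell it out carefully via the parallel-part observation and edge-disjointness), then give a coupon-collector lower bound for hitting all $k\le\Delta(H)$ neighbors with $\Delta(H)$ uniform queries. The paper records the constant $(\Delta(H)-1)!/\Delta(H)^{\Delta(H)-1}$ rather than your $k!/k^k\ge|H|^{-|H|}$, but both are $\Omega_H(1)$ and the reasoning is identical.
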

\begin{proof}
  Since $u$ is an inner vertex and $\le_T$ a uniformly layered
  treedepth embedding, the neighborhood of $h(u)$ in $G[\Hh]$ is
  precisely the image of $N_H(u)$ under $h$. Hence, in the step just
  after discovering $u$, \bfstraverse finds all edges incident to $u$
  with probability at least $!(\Delta_H-1)/\Delta_H^{\Delta_H-1}=\Omega_H(1)$. 
\end{proof}

\begin{lemma}\label{lem:proba_sources}
  Let $H$ and $G$ be graphs, $\Hh$ a set of edge-disjoint copies of a graph $H$, and~$\le_T$
  be a uniformly layered treedepth embedding of $G[\Hh]$ of depth $|H|$.

  Let $s$ be a source of $H$, $F$ a parallel part of $H$ adjacent to
  $s$.

  Consider a run of \bfstraverse(on $G[\Hh]$) with breadth $\Delta(H)$ and depth $t$.
  If, at step $n<t$ the algorithm has found a copy $h(s)$ of $s$
  ($h\in\Hh$), then with probability $\Omega_H(1)$, there
  exists a copy $h'\in\Hh$ with $h'(s)=h(s)$ such that a copy
  of an inner vertex of $F$ under~$h'$ is discovered at
  step $n+1$.
\end{lemma}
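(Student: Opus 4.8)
Our tester for $H$-freeness is the subroutine repeated on each connected component of $H$: \bfstraverse.
\emph{[The following is a plan for the proof of \Cref{lem:proba_sources}.]}

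The plan is to show that, among the (possibly very many) neighbours of $v:=h(s)$ in $G[\Hh]$, at least a $1/\Delta(H)$ fraction are \emph{good}, meaning equal to $h'(y)$ for some copy $h'\in\Hh$ with $h'(s)=v$ and some inner vertex $y$ of $F$; the $\Delta(H)$ random neighbour queries that \bfstraverse makes at $v$ at step $n+1$ then hit a good neighbour with probability $\Omega_H(1)$. The only real subtlety is that $v$ may have unbounded degree, so the counting must be global over all copies of $\Hh$ passing through $v$, not just the given copy $h$.

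First I would use the uniform layering. Since $\le_T$ is uniformly layered, $\Hh$ is uniformly coloured, so, as $v=h(s)$, any copy $h'\in\Hh$ with $v\in\im(h')$ must satisfy $h'(s)=v$. Let $D:=\{h'\in\Hh : v\in\im(h')\}$; then $h'(s)=v$ for every $h'\in D$, and $h\in D$, so $D\neq\emptyset$. Every edge of $G[\Hh]$ incident to $v$ lies in $E(h')$ for some $h'\in D$, and each such $h'$ contributes exactly the $\deg_H(s)\le\Delta(H)$ edges $\{v\,h'(w):w\in N_H(s)\}$ at $v$; hence $\deg^{G[\Hh]}(v)\le |D|\cdot\Delta(H)$.

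Next I would exhibit $|D|$ distinct good neighbours of $v$. Fix a vertex $x$ of $F$ adjacent to $s$ in $H$, which exists because the parallel part $F$ is adjacent to $s$; note $x$ is an inner vertex, since parallel parts contain no source. For each $h'\in D$ the edge $h'(s)h'(x)$ lies in $E(h')\subseteq E(G[\Hh])$, so $h'(x)$ is a neighbour of $v$. The map $h'\mapsto h'(x)$ is injective on $D$: by the observation that a vertex lying in the image of a parallel part belongs to exactly one copy of $\Hh$, $h'_1(x)=h'_2(x)$ forces $h'_1=h'_2$. So $v$ has at least $|D|$ distinct neighbours of the form $h'(x)$ with $h'\in D$, all of them good.

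Finally I would combine the two estimates: a single random neighbour query at $v$ returns one of these good neighbours with probability at least $|D|/(|D|\,\Delta(H))=1/\Delta(H)$. Since $h(s)=v$ was discovered at step $n<t$, it is a current vertex at step $n+1$, where \bfstraverse makes $\Delta(H)$ independent queries at it; the probability that none of them is good is at most $(1-1/\Delta(H))^{\Delta(H)}<1/e$, so with the remaining probability $\Omega_H(1)$ some query returns $h'(x)$ for a copy $h'\in D$, and then at step $n+1$ we have discovered $h'(x)$, a copy under $h'$ of the inner vertex $x$ of $F$ with $h'(s)=h(s)$ — exactly the stated conclusion. I do not expect a genuine obstacle; the point to get right is the one flagged at the outset: a single-copy argument aiming only at $h(x)$ would succeed with probability $1/\deg^{G[\Hh]}(v)$, so it is essential that \emph{every} copy through $v$ donates a fresh good neighbour, which is exactly what the injectivity of $h'\mapsto h'(x)$ (a consequence of the parallel-part structure) provides.
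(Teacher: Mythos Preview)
Your argument is correct and is essentially the same as the paper's: both show that among the neighbours of $v=h(s)$ in $G[\Hh]$ a fixed $\Omega_H(1)$ fraction lead into a copy of $F$, because every copy $h'\in\Hh$ through $v$ contributes at most $\deg_H(s)\le\Delta(H)$ to the degree of $v$ and at least one ``good'' neighbour in $h'(F)$. Your presentation is in fact cleaner than the paper's---you make the injectivity of $h'\mapsto h'(x)$ explicit (via the observation that inner vertices of a parallel part lie in a unique copy), whereas the paper's displayed probability $1-(m_F/\deg^H(u))^{\Delta_H}$ appears to contain a typo.
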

\begin{proof}
  Let $F$ and $F'$ be two distinct parallel parts, both adjacent to a
  source $s$ of $H$, and let~$h$ be a copy of $H$ in $\Hh$.
  Observe that there are as many copies of $F$ as there are of $F'$
  which are adjacent to $h(s)$ in $G$ (one each per copy with $h(s)$ in the image).

  Let $N$ be the number of copies of $H$ $h(s)$ is in the image of. Let
  $m_F$ and $m_{F'}$ be the number of edges of $H$ in $F$ and $F'$,
  respectively, which are incident to $s$. The probability
  that one step after discovering $h(s)$, \bfstraverse
  finds edges from $h(s)$ to vertices copies of inner vertices of $F$
  is at least $1-(m_F/\deg^H(u))^{\Delta_H}=\Omega_H(1)$.
\end{proof}

\begin{lemma}\label{lem:proba_parts1}
  Let $H$ and $G$ be graphs, $\Hh$ a set of edge-disjoint copies of a graph $H$, and~$\le_T$
  be a uniformly layered treedepth embedding of $G[\Hh]$ of depth $|H|$.

  Let $F$ be a parallel part of $H$,
  and $u$ an inner vertex of $F$.

  Consider a run of \bfstraverse (on $G[\Hh]$) with breadth $\Delta(H)$ and depth
  $t>|F|$. If, at step~$n\le t-(|F|-1)$ the algorithm has already
  discovered the copy $h(u)$ of $u$ ($h\in\Hh$), then with probability
  $\Omega_{H}(1)$, the copy $h(F)$ as well as the copies
  of the sources adjacent to~$F$ under $h$ are discovered at step
  $n+|F|-1$. 
\end{lemma}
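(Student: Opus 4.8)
The plan is to bootstrap \cref{lem:proba_neighborhood} along a breadth-first layering of the connected graph $F$. The starting point is the structural fact that makes \cref{lem:proba_neighborhood} work: every vertex of $F$ is an inner vertex of $H$ lying in a parallel part, so for $h\in\Hh$ the vertex $h(w)$ belongs to exactly one copy of $\Hh$ (namely $h$), whence its neighbourhood in $G[\Hh]$ is exactly $h(N_H(w))$ and in particular $\deg^{G[\Hh]}(h(w))=\deg^{H}(w)\le\Delta(H)$. Consequently, whenever \bfstraverse (run with breadth $\Delta(H)$) processes such a vertex $h(w)$, its round of $\Delta(H)$ random-neighbour queries hits every neighbour of $h(w)$ in $G[\Hh]$ with probability at least some $p_0=\Omega_H(1)$ (the same coupon-collector estimate used in the proof of \cref{lem:proba_neighborhood}).

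Next I would layer $F$ by distance from $u$: set $k=\operatorname{ecc}_F(u)\le |F|-1$ and $L_j=\{w\in V(F):d_F(u,w)=j\}$, so $L_0=\{u\}$ and $\sum_{j<k}|L_j|\le |F|-1$. For each $w\in V(F)$ let $E_w$ be the event that the (unique, if it occurs) query round that \bfstraverse performs at $h(w)$ covers $N_{G[\Hh]}(h(w))=h(N_H(w))$; since each vertex of $G[\Hh]$ is processed at most once and with its own fresh block of random neighbours, the events $E_w$ are mutually independent, each of probability $\ge p_0$. I would then prove by induction on $j\le k$ that, on the event $\bigcap_{w\in L_0\cup\cdots\cup L_{j-1}}E_w$, every vertex of $h(L_j)$ is discovered by step $n+j$: the base case is the hypothesis, and for $j\ge 1$ any $v\in L_j$ has a neighbour $w\in L_{j-1}$ in $F$ whose image $h(w)$ is discovered by step $n+j-1$, hence lies in \texttt{current\_vertices} at step $n+j$, and on $E_w$ its query round finds the edge $h(w)h(v)$, so $h(v)$ is discovered by step $n+j$. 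The same query round of $h(w)$ also finds the edges from $h(w)$ to the copies of all sources adjacent to $w$, so once $h(L_0),\dots,h(L_k)$ are discovered, every source adjacent to $F$ has its copy under $h$ discovered as well.

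Finally I would collect the estimate: on $\bigcap_{w\in L_0\cup\cdots\cup L_{k-1}}E_w$, which has probability at least $p_0^{\,|F|-1}=\Omega_H(1)$, the whole of $h(F)$ is discovered by step $n+\operatorname{ecc}_F(u)\le n+|F|-1$ and each copy of a source adjacent to $F$ at most one step later, all of which stays within the run by the hypotheses $n\le t-(|F|-1)$ and $t>|F|$. The delicate part is the step/depth bookkeeping rather than the probability: one must check that a vertex discovered at step $m$ is processed exactly once, at step $m+1$ (so it genuinely gets a query round), that BFS started from $h(u)$ never needs to expand from a source-copy (whose $G[\Hh]$-neighbourhood is \emph{not} the $h$-image), and that the copy of a source hanging off a last-layer vertex of $L_k$ is picked up only one step after that vertex — the off-by-one that is absorbed by $\operatorname{ecc}_F(u)\le|F|-1$ together with $t>|F|$. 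Everything else — the coupon-collector bound, the independence of the $E_w$, and the layer-by-layer induction — is routine.
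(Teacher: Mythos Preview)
Your proposal is correct and follows exactly the paper's approach. The paper's proof is a one-liner --- apply \cref{lem:proba_neighborhood} at most $|F|$ times, since $|F|$ bounds the diameter of $F$ --- and your BFS-layering of $F$ together with the independent events $E_w$ is precisely the unfolding of that iteration, with the coupon-collector bound, independence, and step bookkeeping made explicit.
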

\begin{proof}
  By applying Lemma~\ref{lem:proba_neighborhood} at most $|F|$ times,
  since $|F|$ bounds the diameter of $F$ from above. 
\end{proof}

\begin{lemma}\label{lem:proba_parts2}
  Let $H$ and $G$ be graphs, $\Hh$ a set of edge-disjoint copies of a graph $H$, and~$\le_T$
  be a uniformly layered treedepth embedding of $G[\Hh]$ of depth $|H|$.

  Let $u$ be an inner vertex of a connected component $C$ of $H$.

  Consider a run of \bfstraverse (on $G[\Hh]$) with breadth $\Delta(H)$ and depth
  $t>|C|$. If the algorithm starts with a copy $h(u)$ of $u$
  ($h\in\Hh$), then with probability $\Omega_{\H}(1)$, the
  algorithm finds a copy of $C$ consisting of copies of its parallel
  parts corresponding to elements of~$\Hh$. 
\end{lemma}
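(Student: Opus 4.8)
The plan is to realise a fixed execution of Algorithm~\ref{alg:constructing_H} inside the given run of \bfstraverse. Condition on the first vertex picked by \bfstraverse being $h(u)$, let $F_1$ be the parallel part of $C$ containing $u$, and fix one execution of Algorithm~\ref{alg:constructing_H} started from the copy $h(F_1)$; write $F_1,\dots,F_k$ (so $k\le|C|\le|H|$) for the parts of $C$ in the order this execution discovers them, and $s_2,\dots,s_k$ for the ports used, so that each $F_j$ ($j\ge2$) is adjacent to $s_j$ and $s_j$ is a port of $\{F_1,\dots,F_{j-1}\}$ of maximal level in $\le_T$. The point is that ``following the ports $s_2,\dots,s_k$ in this order, with arbitrary copies'' is again a valid execution of Algorithm~\ref{alg:constructing_H}: the set of parts already treated, and hence the set of ports and which of them has maximal level, depends only on the part order, not on the chosen copies. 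Hence by \Cref{lem:main_algo} the resulting copies of $F_1,\dots,F_k$ are pairwise compatible and, by \Cref{obs:compatible_parts}, their union with the adjacent sources is a copy of $C$. Moreover, since an inner vertex of a part lies in a unique copy from $\Hh$, the copy in which \bfstraverse lands the first time it reaches an inner vertex of a copy of $F_j$ is determined by that vertex and automatically agrees on $s_j$ with the copy through which $s_j$ was discovered. So it suffices to show that with probability $\Omega_H(1)$, \bfstraverse records all edges of a copy of $C$ assembled in this way.

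I would then define the success event round by round and lower-bound it. First, by \Cref{lem:proba_parts1} with $n=0$ at the inner vertex $h(u)$ of $F_1$, with probability $\Omega_H(1)$ the copy $h(F_1)$ and the copies under $h$ of the sources adjacent to $F_1$ are found within $|F_1|-1$ rounds. Then, for $j=2,\dots,k$ in order: when the already‑found copy of $s_j$ lies on the frontier, \Cref{lem:proba_sources} yields with probability $\Omega_H(1)$ that one round later \bfstraverse has reached some copy of an inner vertex of $F_j$, coming from a copy $h_j'\in\Hh$ with $h_j'(s_j)$ equal to that found copy of $s_j$; and then \Cref{lem:proba_parts1}, applied from this inner vertex, completes $h_j'(F_j)$ together with the copies of the sources adjacent to $F_j$ within $|F_j|-1$ further rounds, again with probability $\Omega_H(1)$. (If one source is the port for several parts, one invokes the immediate variant of \Cref{lem:proba_sources}, proved the same way, that the queries at a copy of a source hit inner vertices of \emph{all} the parts of $C$ adjacent to it; there are at most $\deg^H(s)\le\Delta(H)$ of them, so the probability is still $\Omega_H(1)$.) The conjunction of these events is precisely that \bfstraverse records all edges of a copy of $C$ of the required shape.

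It remains to check that the depth $t>|C|$ suffices. Processing $F_1$ uses $|F_1|-1$ rounds, and processing each later $F_j$ uses one round for the jump from $s_j$ plus $|F_j|-1$ rounds to complete $F_j$, i.e. $|F_j|$ rounds; since the parts are pairwise vertex‑disjoint subgraphs of $C$, the number of rounds consumed along any chain from $F_1$ to a later part, and hence the total number of rounds used (even though \bfstraverse may process several branches simultaneously), is at most $(|F_1|-1)+\sum_{j\ge2}|F_j|\le|C|-1<t$. In particular, every time \Cref{lem:proba_parts1} is applied at a round $n$ to discover a part $F$ we have $n+|F|-1\le|C|-1<t$, so its hypothesis $n\le t-(|F|-1)$ holds.

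Finally, the success event is an intersection of at most $2k\le 2|H|$ events; each is governed by fresh random neighbour queries made at a vertex distinct from, and independent of, those underlying the earlier events --- inner vertices of distinct, vertex‑disjoint parts, and copies of sources, each queried once --- so its conditional probability given the earlier ones is still $\Omega_H(1)$, and by the chain rule the intersection has probability $\Omega_H(1)^{2|H|}=\Omega_H(1)$, as claimed. The main obstacle is the bookkeeping of the third paragraph together with the compatibility argument of the first: reconciling the frontier‑parallel, round‑synchronised behaviour of \bfstraverse with the sequential structure of Algorithm~\ref{alg:constructing_H}, and ensuring that the copies \bfstraverse happens to land in remain pairwise compatible --- which is exactly what the maximal‑level port rule and \Cref{lem:main_algo} provide.
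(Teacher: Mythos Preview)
Your proof is correct and follows essentially the same approach as the paper: simulate Algorithm~\ref{alg:constructing_H} inside \bfstraverse, using \Cref{lem:proba_parts1} to complete each part and \Cref{lem:proba_sources} for the jumps between parts, and invoke \Cref{lem:main_algo} together with \Cref{obs:compatible_parts} to guarantee that the assembled copies form a copy of $C$. Your write-up is considerably more detailed than the paper's three-line sketch---in particular your observation that the validity of the port sequence $s_2,\dots,s_k$ depends only on the part order (so arbitrary copies still give a valid execution), your treatment of a single source serving as port for several parts, and your depth bookkeeping all fill in points the paper leaves implicit.
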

\begin{proof}
  \Cref{lem:proba_sources,lem:proba_parts1} makes it
  possible to simulate the behavior of
  Algorithm~\ref{alg:constructing_H} inside a run of the \bfstraverse. With
  breadth $\Delta(H)$ and depth $|C|$ and probability~$\Omega_H(1)$,
  the output of \bfstraverse contains a copy of $C$ in $G$ which is constructed just as
  in the algorithm. Transitions between parts is ensured by \Cref{lem:proba_sources}.
\end{proof}

\begin{theorem} \label{lem:testing_H_freeness_treedepth} For any graph
  $H$ and graph class $\mathcal C$ of bounded treedepth, the
  property of being $H$-free can be tested in the random neighbor
  model on graphs of $\mathcal C$ with constant query
  complexity and one-sided error.
\end{theorem}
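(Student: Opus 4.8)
The plan is to combine the reductions from the previous sections with the structural machinery of \Cref{sec:algo-on-TD}. Since we may assume, thanks to \Cref{prop:H-Hh}, that $H$ is a single graph, we first handle the case where $H$ is connected, and then lift to the general case.

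Suppose $H$ is connected and $G \in \Cc$ is $\e$-far from being $H$-free, where $\Cc$ has treedepth at most $d$ (note $d$-bounded treedepth implies $d$-degeneracy, so the lemmas of \cref{sec:reduction_subgraph} apply). First I would invoke \cref{lem:e-far-edge-disjoint} to obtain a set $\Hh_1$ of $\Omega_{\e,H}(|G|)$ edge-disjoint copies of $H$. Then \cref{lem:many-uniformly-colored-copies} yields a uniformly colored subset $\Hh_2 \subseteq \Hh_1$ with $|\Hh_2| = \Omega(|\Hh_1|)$, and \cref{lem:many-layered-copies} (using $\td(G) \le d$) further refines it to $\Hh_3$ with a uniformly layered tree embedding $\le_T$ of $G[\Hh_3]$ of depth $|H|$, still with $|\Hh_3| = \Omega(|G|)$. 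Finally \cref{lem:copy-high-degree} extracts $\Hh \subseteq \Hh_3$ with $G[\Hh]$ being $c$-degree preserving for a constant $c = O_{H,\e,\Cc}(1)$, and $|\Hh| = \Omega(|G|)$. (One must check the order of applications is coherent — degree preservation and uniform layering do not interfere, since both are obtained by passing to subsets.)

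Now I would analyze $\bfstraverse(G[\Hh], |H|, \Delta(H))$. With probability $\Omega_{H,\e}(1)$ over the choice of starting vertex $v$, we land on a copy $h(u)$ of an inner vertex $u$ of $H$ — indeed, the number of vertices of $G[\Hh]$ that are copies of inner vertices is $\Omega(|G|)$ (each copy in $\Hh$ contributes at least one, and copies are edge-disjoint), and $|G[\Hh]| \le |G|$ while the starting vertex is uniform in $G[\Hh]$. By \cref{lem:proba_parts2}, from such a start $\bfstraverse$ finds a copy of the (connected) $H$ with probability $\Omega_H(1)$. Chaining, a single call finds $H$ in $G[\Hh]$ with probability $p_0 = \Omega_{H,\e,\Cc}(1)$; by \cref{lem:high-proba} the same search run inside the real graph $G$ (rather than $G[\Hh]$) outputs that same copy with probability at least $c^{\Delta(H)^{|H|}+1} p_0 = \Omega_{H,\e,\Cc}(1)$. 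Repeating $\bfstraverse$ a constant number $n = O_{H,\e,\Cc}(1)$ of times boosts this to $2/3$, giving a one-sided-error tester of constant query complexity for connected $H$.

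For general $H$ with connected components $C_1,\dots,C_{n_H}$, the tester $\texttt{Tester}$ issues one call to $\bfstraverse(G, |H|, \Delta(H))$ per component per iteration. Using the uniform coloring of $\Hh$: any copy of a component $C_j$ produced by $\bfstraverse$ (built from parallel parts of possibly different copies in $\Hh$, per \cref{obs:compatible_parts}) respects the global vertex-coloring of $H$, so copies of distinct components found across the $n_H$ calls use disjoint vertex sets and assemble into a genuine copy of $H$. Each of the $n_H$ calls succeeds with probability $\Omega_{H,\e,\Cc}(1)$ independently, so one iteration succeeds with probability $\Omega_{H,\e,\Cc}(1)$, and $n = O_{H,\e,\Cc}(1)$ iterations push this past $2/3$; isolated vertices of $H$ are filled in by arbitrary vertices of $G$ as noted in the description of $\texttt{Tester}$, using $|G| > |H|$. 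One-sidedness is immediate: if $G$ is $H$-free, $\bfstraverse$ never returns a subgraph containing a copy of $H$, so the tester always accepts. \textbf{The main obstacle} I anticipate is the bookkeeping in the component-composition step — verifying that independence of the $n_H$ \bfstraverse calls combined with the uniform coloring really does yield a single embedded copy of $H$ (disjointness across components), and that the constants in \cref{lem:proba_parts2,lem:high-proba} compose without hidden dependence on $|G|$ — rather than any single conceptual difficulty.
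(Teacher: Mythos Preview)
Your proposal is correct and follows essentially the same approach as the paper: apply \cref{lem:e-far-edge-disjoint}, \cref{lem:many-uniformly-colored-copies}, \cref{lem:many-layered-copies}, and \cref{lem:copy-high-degree} in that order to obtain a $c$-degree-preserving $\Hh$ with a uniformly layered embedding, then use \cref{lem:proba_parts2} to find each component with probability $\Omega_{H,\e,\Cc}(1)$, transfer via \cref{lem:high-proba}, and boost by repetition, with disjointness of the component copies guaranteed by compatibility with the uniform coloring. The only cosmetic difference is that the paper treats all components at once rather than isolating the connected case first; your justification that each copy contributes at least one inner vertex should appeal to the uniform layering (not just edge-disjointness) to conclude those inner-vertex images are distinct, but this is exactly the observation already recorded in \cref{sec:algo-on-TD}.
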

\begin{proof}
  Fix a graph $G\in\mathcal C$ which is $\e$-far from being $H$-free.
  We prove
  that there exists a constant $n=O_{\e,H,\Cc}(1)$ such that 
  $\texttt{Tester}(n,H,\e,G)$
  rejects with probability at least $2/3$. Since $\texttt{Tester}$ starts
  by stripping isolated vertices from~$H$, we assume that $H$ contains none.
   Let $n_H$ be the number of connected components of $H$. 
  First, we use \Cref{lem:e-far-edge-disjoint,lem:many-uniformly-colored-copies,lem:many-layered-copies,lem:copy-high-degree} (in this order) to 
  obtain a set $\Hh$ of cardinality $\Omega_{\e,H,\Cc}(|V(G)|)$ such that $G[\Hh]$ is $c$-degree preserving and contains a uniformly-layered tree embedding\footnote{Note that having uniformly-layered tree embedding is a monotone property, so restricting further to get $c$-degree preserving is safe.} (where $c=O_{\e,H,\Cc}(1)$).
  Let $C$ be a connected component of~$H$. We prove that the output of a single call
  to \bfstraverse$(G[\Hh],|H|,\Delta(H))$ 
  contains a copy of $C$ compatible with
  the uniformly-layered tree embedding with probability
  $\Omega_{\e,H}(1)$. By \Cref{lem:proba_parts2}, it is sufficient to
  prove that \bfstraverse starts on the copy of an inner vertex of
  $C$ with high probability. Since $C$ contains an edge, it cannot be reduced to a single source, and
  $C$ contains at least one inner vertex. As copies of inner vertices
  are in the image of precisely one copy of $H$, and as~$G[\Hh]$ contains
  $\Omega_{\e,H,\Cc}(|G|)$ copies of $H$, $G[\Hh]$ contains $\Omega_{\e,H,\Cc}(|G|)$
  vertices copies of inner vertices of~$C$. Hence, 
  \bfstraverse$(G[\Hh],|H|,\Delta(H))$ 
  starts on an inner vertex of $C$ with probability~$\Omega_{\e,H,\Cc}(1)$.

  Now, with \cref{lem:high-proba}, and the fact that $\Hh$ is $c$-degree preserving, we have that the output of a single call to \bfstraverse$(G,|H|,\Delta(H))$ 
  contains a copy of $C$ compatible with
  the uniformly-layered tree embedding with probability
  $\Omega_{\e,H,\Cc}(1)$ as well.

  Each execution of the part of the algorithm which is repeated $n$
  times (i.e. the $n_H$ calls to \bfstraverse) finds a copy of $H$
  with probability $p=\Omega_{\e,H,\Cc}(1)$ each time it is executed.
  Indeed, the $i$th connected component is found with probability
  $\Omega_{\e,H,\Cc}(1)$ by the $i$th call to \bfstraverse, in the form of
  a copy compatible with the uniformly-layered tree embedding, so
  these copies of the connected components of $H$ found with high
  probability are vertex disjoints and form a copy of $H$. 
  Let $n$ be such that $2/n=p$.
  Then, the probability that $\texttt{Tester}(n,H,\e,G)$ finds
  a copy of $H$ is at least $1-(1-2/n)^{n}>1-e^{-2}>2/3$.
\end{proof}


\section{Outlook}

Could one extend our results to broader classes of graphs? At least not in the direction of bounded degeneracy. In \cite{AwofesoGLR25} is it proved that for any $r\ge 4$, $C_r$-freeness is not testable for graphs of bounded $(\lfloor r/2 \rfloor - 1)$-admissibility. This being a special case of bounded degeneracy.

However, bounded $r$-admissibility is incomparable with nowhere denseness.
Nowhere dense is a very robust notion that yield numerous algorithms in recent years. A key difference with bounded expansion, is that while nowhere dense graphs enable $p$-treedepth coloring, it requires $O_{\delta,p}(|G|^\delta)$ many colors (and not constantly many). Meaning that constant numbers in some of our key lemmas would now depend on $G$. Additionally, nowhere dense graphs can have a super-linear number of edges (e.g.~$|G|\cdot\log(|G|)$). One could start by adapting the concept of $\epsilon$-far, enabling removal of an edge set with size $O_\delta(\epsilon\cdot |G|^{1+\delta})$ for all $\delta >0$. 

Another extension, as in \cite{EsperetN22}, could be to look at approximate
proof labelling scheme. However, the main result of~\cite{EsperetN22} (pushing from finitely many forbidden subgraphs to monotone properties) cannot be adapted to bounded expansion. In part because already bipartiteness is not testable on bounded degree graphs, so even less so on graphs with bounded expansion.

\newpage
\bibliography{biblio.bib}

\newpage
\appendix
\section{Missing proofs}\label{sec:appendix-missing-proof}

In this appendix, we formally prove \cref{lem:high-proba,lem:copy-high-degree}. Both are inspired from~\cite{CzumajS19}. 

\lemHighProba*

\begin{proof}
  Let $F$ be a subgraph of $G'$ which
  is returned by $\bfstraverse(G',d,t)$ with probability $q$.

  Assume the search starts on a vertex $v$ of $G'$.

  Since $G'$ is $c$-degree preserving, $\bfstraverse$ $(G,d,t)$ starts
  on $v$ with probability $1/|G|\ge c/|G'|$ (this is the $+1$
  in $c^{d^t+1}$).

  Let $uw$ be an edge of $G'$. Assume, during runs of
  $\bfstraverse(G',d,t)$ and $\bfstraverse(G,d,t)$ both starting on
  $v$, that $u$ has just been visited, and $uw$ has not been
  discovered yet by either runs.

  Since $G'$ is $c$-degree preserving, the neighborhood of $u$ in $G$
  is larger than in $G'$ by a factor of at most $1/c$. The
  probabilities that the runs of $\bfstraverse(G',d,t)$ and
  $\bfstraverse(G,d,t)$, just after visiting $u$, discover
  $uw$ are, respectively:
  \[
    \frac{1}{\deg^{G'}(u)}~~~~\text{and}~~~~\frac{1}{\deg^{G}(u)}\ge\frac{c}{\deg^{G'}(u)}.
  \]
  In other words, the probability that the run of $\bfstraverse(G,d,t)$
  discovers $uw$ just after visiting $u$ is at least $c$ times the
  probability that the run of $\bfstraverse(G',d,t)$ discovers $uw$ just after
  visiting $u$.

  By construction, there are at most $d^t$ edges in $F$: the
  probability that $\bfstraverse$ $(G,d,t)$ discovers precisely the
  edges of $F$, and in the same order than
  $\bfstraverse$ $(G',d,t)$ does, is at least $c^{d^t}q$.

  Combining this with the probability that $\bfstraverse(G,d,t)$
  starts on $v$, the probability that $\bfstraverse(G,d,t)$ finds a
  copy of $F$ in $G$ is at least $c^{d^t+1}q$.
\end{proof}

\lemCopyHighDegree*

\begin{proof}
  We prove this statement by providing a simple algorithm computing
  such a set~$\mathcal H'$: take $\cal H'=\cal H$ at first. While
  there exists a vertex of $G[\cal H']$ which is not
  $(\alpha/4d)$-degree preserving in $G[\cal H']$, choose one, say
  $v$, and remove from $\cal H'$ every copy of $H$ containing $v$ in
  their image. This procedure starts with $\mathcal H' = \mathcal H$
  and ends when $G[\mathcal H']$ is $(\alpha/4d)$-degree preserving.
  It clearly terminates after a finite number of steps.

  In the following, we first argue that
  $|\mathcal H'| \geq \alpha|G|/2$ and then that
  $|G[\mathcal H']|\ge \alpha|G|/4d$.

  Let $v$ be a vertex that is chosen by the algorithm to remove all its incident copies.
  Once the vertex $v$ is chosen by the algorithm, it cannot be selected again.
  Since the degree of~$v$ at the moment of its selection is bounded by $\alpha\deg^G(v)/4d$, and as the copies in $\mathcal H$ are edge-disjoint,
  the number of copies removed in this step is at most $\alpha\deg^G(v)/4d$.

  We bound the total number of copies removed during the algorithm by summing the bounds over \emph{all} vertices $v$ of $G$:
  \[
    |\mathcal H|-|\mathcal H'|\le\sum_{v\in G}\frac{\alpha\deg^G(v)}{4d} = \frac{\alpha |E(G)|}{2d}.
  \]
  Since, furthermore, $|E(G)|\le d|G|$ and $|\mathcal H| \geq \alpha |G|$:
  \[
    |\mathcal H'| \geq |\mathcal H| - \frac{\alpha |E(G)|}{2d}
    \geq \alpha|G| - \frac{\alpha |E(G)|}{2d}
    \geq \frac{\alpha|E(G)|}{d} - \frac{\alpha |E(G)|}{2d}
    \geq \frac{\alpha|G|}{2}.
  \]
  Since $G[\mathcal H']$ is a subgraph of $G$,
  $G[\mathcal H']$ is also $d$-degenerate, and we have
  \[
    |G[\mathcal H']|\ge\frac{|E(G[\mathcal H'])|}{d}\ge\frac{\alpha|G|~|E(H)|}{2d}>\frac{\alpha|G|}{4d},
  \]
  since $|E(H)|\geq 1$. Therefore, $\mathcal H'$ is indeed $(\alpha/4d)$-degree preserving.
\end{proof}

\end{document}